\pgfplotsset{compat=newest}
\pgfplotsset{plot coordinates/math parser=false}
\newcommand{\mc}[1]{\mathcal{#1}}
\newcommand{\mbf}[1]{\mathbf{#1}}
\newcommand{\coleq}{\mathrel{\mathop:}=}
\newcommand{\Z}{\mathbb{Z}}
\newcommand{\tx}[1]{\text{#1}}
\newcommand{\iid}{i.\@i.\@d.\ }
\newcommand{\ceil}[1]{\lceil{#1}\rceil}
\theoremstyle{definition}\newtheorem{lemma}{Lemma}
\theoremstyle{definition}
\theoremstyle{definition}\newtheorem{theorem}[lemma]{Theorem}
\theoremstyle{definition}
\newtheorem{definition}[lemma]{Definition}
\newtheorem{Example}[lemma]{Example}
\newtheorem{Remark}[lemma]{Remark}
\DeclareMathOperator{\diag}{diag}
\def\ceps{c_{\epsilon}}
\begin{document}
\title{Group Testing using left-and-right-regular sparse-graph codes}
\author{Avinash Vem, Nagaraj T. Janakiraman, Krishna R. Narayanan\\
Department of Electrical and Computer Engineering \\
Texas A\&M University\\
{\tt\small {\{vemavinash,tjnagaraj,krn\}@tamu.edu} }}

\maketitle
\begin{abstract}
We consider the problem of non-adaptive group testing of $N$ items out of which $K$ or less items are known to be defective. We propose a testing scheme based on left-{\em and}-right-regular sparse-graph codes and a simple iterative decoder. We show that for any arbitrarily small $\epsilon>0$ our scheme requires only $m=\ceps K\log \frac{c_1N}{K}$ tests to recover $(1-\epsilon)$ fraction of the defective items with high probability (w.h.p) i.e., with probability approaching $1$ asymptotically in $N$ and $K$, where the value of constants $\ceps$ and $\ell$ are a function of the desired error floor $\epsilon$ and constant $c_1=\frac{\ell}{\ceps}$ (observed to be approximately equal to 1 for various values of $\epsilon$). More importantly the iterative decoding algorithm has a sub-linear computational complexity of $\mc{O}(K\log \frac{N}{K})$ which is known to be optimal. Also for $m=c_2 K\log K\log \frac{N}{K}$ tests our scheme recovers the \textit{whole} set of defective items w.h.p. These results are valid for both noiseless and noisy versions of the problem as long as the number of defective items scale sub-linearly with the total number of items, i.e., $K=o(N)$. The simulation results validate the theoretical results by showing a substantial improvement in the number of tests required when compared to the testing scheme based on left-regular sparse-graphs.
\end{abstract}

\section{Introduction}
The problem of Group Testing (GT) refers to testing a large population of $N$ items for $K$ defective items (or sick people) where grouping multiple items together for a single test is possible. The output of the test is \textit{negative} if all the grouped items are non-defective or else the output is \textit{positive.} In the scenario when $K \ll N$, the objective of GT is to design the testing scheme such that the total number of tests $m$ to be performed is minimized.

This problem was first introduced to the field of statistics by Dorfman \cite{dorfman1943detection} during World War II for testing the soldiers for syphilis without having to test each soldier individually. Since then group testing has found application in wide variety of problems like clone library screening, non-linear optimization, multi-access communication etc.., \cite{du1999combinatorial} and fields like biology\cite{chen2008survey}, machine learning\cite{malioutov2013exact}, data structures\cite{goodrich2005indexing} and signal processing\cite{emad2014poisson}. A comprehensive survey on group testing algorithms, both combinatorial and probabilistic, can be found in \cite{du1999combinatorial,chan2014non,atia2012boolean}. 

In the literature on Group Testing, three kinds of reconstruction guarantees have been considered: combinatorial, probabilistic and approximate. In the combinatorial designs for the GT problem, the probability of recovery for any given defective set should be equal to $1$ whereas in the probabilistic version one is interested in recovering \textit{all} the defective items with high probability (w.h.p) i.e., with probability approaching $1$ asymptotically in $N$ and $K$. Another variant of the probabilistic version is that the probability of recovery is required to be  greater than or equal to $(1-\varepsilon)$ for a given $\varepsilon>0$. For the approximate recovery version one is interested in only recovering a $(1-\epsilon)$ fraction of the defective items (not the whole set of defective items) w.h.p.

For the combinatorial GT the best known lower bound on the number of tests required is $\Omega(K^2\frac{\log N}{\log K})$ \cite{d1982bounds,erdos1985families} whereas the best known achievability bound is $\mc{O}(K^2 \log N)$ \cite{kautz1964nonrandom,porat2011explicit}. Most of these results were based on algorithms relying on exhaustive searches thus have a high computational complexity of atleast $\mc{O}(K^2 N\log N)$. Only recently a scheme with efficient decoding was proposed by Indyk et al., \cite{indyk2010efficiently} where all the defective items are guaranteed to recover using $m=\mc{O}(K^2\log N)$ tests in $\text{poly}(K)\cdot \mc{O}(m \log^2 m )+\mc{O}(m^2)$ time. 

If we consider the probabilistic version of the problem, it was shown in \cite{chan2014non,atia2012boolean} that the number of tests necessary is $\Omega(K\log \frac{N}{K})$ which is the best known lower bound in the literature. And regarding the best known achievability bound Mazumdar \cite{mazumdar2015nonadaptive} proposed a construction that has an asymptotically decaying error probability with $\mc{O}(K\frac{\log^2 N}{\log K})$ tests. For the approximate version it was shown \cite{atia2012boolean} that the required number of tests scale as $\mc{O}(K\log N)$ and to the best of our knowledge this is the tightest bound known.

In \cite{lee2015saffron} authors Lee, Pedarsani and Ramchandran proposed a testing scheme based on \textit{left-regular sparse-graph} codes and a simple iterative decoder based on the\textit{peeling} decoder, which are popular tools in channel coding \cite{richardson2008modern}, for the non-adaptive group testing problem. They refer to the scheme as SAFFRON(\textbf{S}parse-gr\textbf{A}ph codes \textbf{F}ramewrok \textbf{F}or g\textbf{RO}up testi\textbf{N}g), a reference which we will follow through this document. The authors proved that using SAFFRON scheme $m=c_\epsilon K\log N$ number of tests are enough to identify atleast $(1-\epsilon)$ fraction of defective items (the approximate version of GT) w.h.p. The precise value of constant $c_\epsilon$ as a function of the required error floor $\epsilon$ is also given. More importantly the computational complexity of the proposed peeling based decoder is only $\mc{O}(K\log N)$. They also showed that with $m=c\cdot K\log K \log N$ tests  i.e. with an additional $\log K$ factor, the \textit{whole} defective set (the probabilistic version of GT) can be recovered with an asymptotically high probability of $1-\mc{O}(K^{-\alpha})$.

\subsection*{Our Contributions}
In this work, we propose a non-adaptive GT scheme that is similar to the SAFFRON but we employ \textit{left-and-right-regular sparse-graph} codes instead of the left-regular sparse-graph codes and show that we only require $\ceps K\log \frac{N\ell}{\ceps K}$ number of tests for an error floor of $\epsilon$ in the approximate version of the GT problem. Although the testing complexity of our scheme has the same asymptotic order $\mc{O}(K\log N)$ as that of \cite{lee2015saffron}, which as far as we are aware is the best known order result for the required number of tests in the approximate GT, it provides a better explicit upper bound of $\Theta(K\log \frac{N}{K})$ with optimal computational complexity $\mc{O}(K\log \frac{N}{K})$ and also a significant improvement in the required number of tests for finite values of $K,N$. 
Following the approach in \cite{lee2015saffron} we extend our proposed scheme with the singleton-only variant of the decoder to tackle the probabilistic version of the GT problem. In Sec.~\ref{Sec:Singleton-only} we show that for $m=c\cdot K\log K \log \frac{N}{K}$ tests i.e. with an additional $\log K$ factor the \textit{whole} defective set can be recovered w.h.p. Note that the testing complexity of our scheme is only $\log K$ factor away from the best known lower bound of $\Omega(K\log \frac{N}{K})$ \cite{chan2014non} for the probabilistic GT problem. We also extend our scheme to the noisy GT problem, where the test results are corrupted by noise, using an error-correcting code similar to the approach taken in \cite{lee2015saffron}. We demonstrate the improvement in the required number of tests due to \emph{left-}and-\emph{right}-regular graphs  for finite values of $K, N$ via simulations.

\section{Problem Statement}
Formally the group testing problem can be stated as following. Given a total number of $N$ items out of which $K$ are defective, the objective is to perform $m$ different tests and identify the location of the $K$ defective items from the test outputs. For now we consider only the noiseless group testing problem i.e., the result of each test is exactly equal to the boolean OR of all the items participating in the test. 

Let the support vector $\mathbf{x}\in\{0,1\}^{N}$ denote the list of items in which the indices with non-zero values correspond to the defective items. A non-adaptive testing scheme consisting of $m$ tests can be represented by a matrix $\mbf{A}\in\{0,1\}^{m\times N}$ where each row $\mbf{a}_{i}$ corresponds to a test. The non-zero indices in row $\mbf{a}_i$ correspond to the items that participate in $i^{\text{th}}$ test. The output corresponding to vector $\mbf{x}$ and the testing scheme $\mbf{A}$ and can be expressed in matrix form as:
\begin{equation*}
\mbf{y}=\mbf{A\odot x}
\end{equation*}
where $\odot$ is the usual matrix multiplication in which the arithmetic multiplications are replaced by the boolean AND operation and the arithmetic additions are replaced by the boolean OR operation.

%

\section{Review: SAFFRON}
\label{Sec:PriorWork}
As mentioned earlier the SAFFRON scheme \cite{lee2015saffron} is based on left-regular sparse graph codes and is applied for non-adaptive group testing problem. In this section we will briefly review their testing scheme, iterative decoding scheme (reconstruction of $\mbf{x}$ given $\mbf{y}$) and their main results. The SAFFRON testing scheme consists of two stages: the first stage is based on a left-regular sparse graph code which pools the $N$ items into $M$ non-disjoint bins where each item belongs to exactly $\ell$ bins. The second stage comprises of producing $h$ testing outputs at each bin where the $h$ different combinations of the pooled items (from the first stage) at the respective bin are defined according to a universal signature matrix. For the first stage the authors consider a bipartite graph with $N$ variable nodes (corresponding to the $N$ items) and $M$ bin nodes. Each variable node is connected to $\ell$ bin nodes chosen uniformly at random from the $M$ available bin nodes. All the variable nodes (historically depicted on the left side of the graph in coding theory) have a degree $\ell$, hence the left-regular, whereas the degree of a bin node on the right is a random variable in the range $[0:N]$.

\begin{definition}[Left-regular sparse graph ensemble]
Let $\mc{G}_{\ell}(N,M)$ be the ensemble of left-regular bipartite graphs where for each variable node the $\ell$ right node connections are chosen uniformly at random from the $M$ right nodes.
\end{definition}

Let $\mbf{T}_{G}\in\{0,1\}^{M\times N }$ be the adjacency matrix corresponding to a graph $G\in\mc{G}_{\ell}(N,M)$ i.e., each column in $\mbf{T}_{G}$ corresponds to a variable node and has exactly $\ell$ ones. Let the rows in matrix $\mbf{T}_{G}$ be given by $\mbf{T}_{G}=[\mbf{t}^{T}_1,\mbf{t}^{T}_{2},\dots, \mbf{t}^{T}_{M}]^{T}$. For the second stage let the universal signature matrix defining the $h$ tests at each bin be $\mbf{U}\in\{0,1\}^{h\times N}$. Then the overall testing matrix $\mbf{A}\coleq [\mbf{A}_{1}^{T},\ldots,\mbf{A}_{M}^{T}]^T$ where $\mbf{A}_{i}=\mbf{U} \diag (\mbf{t}_i)$ of size $h\times N$ defines the $h$ tests at $i^{\text{th}}$ bin. Thus the total number of tests is $m=M\times h$. 
 
 The signature matrix 	$\mbf{U}$ in a more general setting with parameters $r$ and $p$ can be given by
 \begin{align}
\label{Eqn:SignatureMatrix}
\mbf{U}_{r,p}=\begin{bmatrix}
\mbf{b}_1  & \mbf{b}_2 &\cdots & \mbf{b}_r \\
\overline{\mbf{b}}_1 & \overline{\mbf{b}}_2 & \cdots & \overline{\mbf{b}}_r\\
\mbf{b}_{\pi^{1}_{1}} & \mbf{b}_{\pi^{1}_{2}} & \cdots & \mbf{b}_{\pi^{1}_{r}}\\
\overline{\mbf{b}}_{\pi^{1}_{1}} & \overline{\mbf{b}}_{\pi^{1}_{2}} & \cdots & \overline{\mbf{b}}_{\pi^{1}_{r}}\\
\cdots &  &\vdots \\
\mbf{b}_{\pi^{p-1}_{1}} & \mbf{b}_{\pi^{p-1}_{2}} & \cdots & \mbf{b}_{\pi^{p-1}_{r}}\\
\overline{\mbf{b}}_{\pi^{p-1}_{1}} & \overline{\mbf{b}}_{\pi^{p-1}_{2}} & \cdots & \overline{\mbf{b}}_{\pi^{p-1}_{r}}
\end{bmatrix}
\end{align}  
where $\mbf{b}_{i}\in\{0,1\}^{\ceil{\log_{2}r}}$ is the binary expansion vector for $i$ and $\overline{\mbf{b}}_{i}$ is the complement of $\mbf{b}_{i}$. $\mbf{\pi}^{k}=[\pi^{k}_1,\pi^{k}_2,\ldots,\pi^{k}_r]$ denotes a permutation chosen at random from symmetric group $S_{r}$. Henceforth $\mbf{U}_{r,p}$ will refer to either the ensemble of matrices generated over the choices of the permutations $\pi^{k}$ for $k\in[1:p-1]$ or a matrix picked uniformly at the random from the said ensemble. The reference should be sufficiently clear from the context. In the SAFFRON scheme the authors employed a signature matrix from $\mbf{U}_{r,p}$ with $r=N$ and $p=3$ thus resulting in a $\mbf{U}$ of size $h \times N$ with $h=6\log_{2}N$. 

\subsection*{Decoding}
Before describing the decoding process let us review some terminology. A bin is referred to as a \textit{singleton} if there is exactly one non-zero variable node connected to the bin and similarly referred to as a \textit{double-ton} in case of two non-zero variable nodes. In the case where we know the identity of one of them leaving the decoder to decode the identity of the other one, the bin is referred to as a \textit{resolvable double-ton}. And if the bin has more than two non-zero variable nodes attached we refer to it as a \textit{multi-ton}. First part of the decoder which is referred to as bin decoder will be able to detect and decode exactly the identity of the non-zero variable nodes connected to the bin if and only if the bin is a singleton or a resolvable double-ton. If the bin is a multi-ton the bin decoder will detect it neither as a singleton nor a resolvable double-ton with high probability. The second part of the decoder which is commonly referred to as peeling decoder \cite{li2015subisit}, when given the identities of some of the non-zero variable nodes by the bin decoder, identifies the bins connected to the recovered variable nodes and looks for newly uncovered resolvable double-ton in these bins. This process of recovering new non-zero variable nodes from already discovered non-zero variable nodes proceeds in an iterative manner (referred to as peeling off from the graph historically). For details of the decoder we refer the reader to \cite{lee2015saffron}.

The overall group testing decoder comprises of these two decoders working in conjunction as follows. In the first and foremost step, given the $m$ tests output, the bin decoder is applied on the $M$ bins and the set of variable nodes that are connected to singletons are decoded and output. We denote the decoded set of non-zero variable nodes as $\mc{D}$. Now in an iterative manner, at each iteration, a variable node from $\mc{D}$ is considered and the bin decoder is applied on the bins connected to this variable node.
The main idea is that if one of these bins is detected as a resolvable double-ton thus resulting in decoding a new non-zero variable node. The considered variable node in the previous iteration is moved from $\mc{D}$ to a set of peeled off variable nodes $\mc{P}$ and the newly decoded non-zero variable node in the previous iteration, if any, will be placed in set $\mc{D}$ and continue to the next iteration. The decoder is terminated when $\mc{D}$ is empty and is declared successful if the set $\mc{P}$ equals the set of defective items. 

\begin{Remark}
 Note that we are not literally peeling off the decoded nodes from the graph because of the \textit{non-linear} OR operation on the non-zero variable nodes at each bin thus preventing us in subtracting the effect of the non-zero node from the measurements of the bin node unlike in the problems of compressed sensing or LDPC codes on binary erasure channel.
\end{Remark}

Now we state the series of lemmas and theorems from \cite{lee2015saffron} that enabled the authors to show that their SAFFRON scheme with the described peeling decoder solves the group testing problem with $c\cdot K\log N$ tests and $\mc{O}(K\log N)$ computational complexity.

\begin{lemma}[Bin decoder analysis]
\label{Lem:BinDecoderAnalysis}
For a signature matrix $\mbf{U}_{r,p}$ as described in \eqref{Eqn:SignatureMatrix}, the bin decoder successfully detects and resolves if the bin is either a singleton or a resolvable double-ton. In the case of the bin being a multi-ton, the bin decoder declares a wrong hypothesis of either a singleton or a resolvable double-ton with a probability no greater than $\frac{1}{r^{p-1}}$.
\end{lemma}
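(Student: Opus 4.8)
The plan is to analyze the bin decoder separately for each of the three cases a bin can fall into, showing exact recovery in the first two and bounding the failure probability in the third. Recall that the measurement at a bin applies the signature matrix $\mbf{U}_{r,p}$ to the subset of non-zero variable nodes incident to that bin, combining their signature columns under the boolean OR. The key structural feature of $\mbf{U}_{r,p}$ is that each block consists of a binary expansion $\mbf{b}_i$ stacked with its complement $\overline{\mbf{b}}_i$, so that for any single column the top-and-bottom halves of each $2\ceil{\log_2 r}$-sized block are bitwise complementary. This complementarity is exactly the certificate the bin decoder uses to detect and identify a lone participant.

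First I would treat the singleton case. If exactly one non-zero variable node, say index $j$, is incident to the bin, the observed output is precisely the column $\mbf{U}_{r,p}[:,j]$. Within each of the $p$ blocks the decoder reads off $\ceil{\log_2 r}$ bits and their complement; because $\mbf{b}_j$ and $\overline{\mbf{b}}_j$ partition each block into perfectly complementary halves, the decoder can both (i) verify it is a genuine singleton, since the two halves of the first block must be exact complements, and (ii) directly read the binary expansion of $j$ from the first block, recovering $j$ exactly. Next I would handle the resolvable double-ton: here two nodes $j_1,j_2$ participate, one identity (say $j_1$) already known to the peeling decoder. Since the OR of $\mbf{b}_{j_1}$ with $\mbf{b}_{j_2}$ is observed, subtracting the known contribution of $j_1$ (in the sense of checking which observed ones are unaccounted for by $j_1$) isolates the signature of $j_2$; the complementary structure again lets the decoder confirm that exactly one additional node is present and read off its index. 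Both of these cases are deterministic and require no probabilistic argument.

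The main obstacle, and the crux of the lemma, is the multi-ton false-alarm bound. Here three or more nodes participate, and the decoder fails only if the OR of their signature columns happens to coincide with the signature of some valid singleton or resolvable double-ton, causing a wrong hypothesis. The strategy is to fix the (wrong) hypothesis the decoder might output and bound the probability, over the random permutations $\pi^1,\dots,\pi^{p-1}$ defining $\mbf{U}_{r,p}$, that the superimposed columns are consistent with it across all $p$ blocks. In the first block the permutation is the identity, so no randomness helps there, but in each of the remaining $p-1$ blocks the independent random permutation makes the event that the superposition masquerades as the hypothesized pattern occur with probability at most $\tfrac{1}{r}$ per block, by a counting argument over the $r$ possible positions. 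Since the blocks use independent permutations, these events multiply, yielding an overall bound of $\tfrac{1}{r^{p-1}}$.

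The delicate point to get right is arguing that the per-block masquerading probability is indeed bounded by $\tfrac{1}{r}$ and that independence across blocks is legitimate; I would make this precise by conditioning on the set of participating nodes and observing that within a block the complementary $\mbf{b}/\overline{\mbf{b}}$ structure forces a valid singleton or double-ton pattern to pin down the permuted index of the masquerading node, an event of probability $\tfrac{1}{r}$ under the uniform choice of $\pi^k$. Taking a union bound over the (at most constantly many, or polynomially many in $r$) candidate wrong hypotheses does not change the dominant $\tfrac{1}{r^{p-1}}$ order, and with $p=3$, $r=N$ this recovers the $\tfrac{1}{r^{p-1}}=\tfrac{1}{N^2}$ guarantee used throughout. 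I expect the independence-across-blocks justification to be the step most in need of careful wording.
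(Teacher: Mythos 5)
Your proposal is correct and follows essentially the same argument as the source: the paper itself does not prove this lemma but simply defers to the SAFFRON paper \cite{lee2015saffron} (noting the extension from $r=N,\,p=3$ to general $r,p$ is straightforward), and your reconstruction---deterministic recovery for singletons and resolvable double-tons via the $\mbf{b}_i/\overline{\mbf{b}}_i$ complementary structure, plus a $\tfrac{1}{r}$-per-section consistency-check failure bound multiplied over the $p-1$ independent random permutations---is exactly that cited argument. The only point deserving extra care, which you already flag, is that in each section $k$ both the garbled decoded index and the target $\pi^k_{l_0}$ depend on the same permutation $\pi^k$, so the per-section bound of $\tfrac{1}{r}$ requires the conditioning argument you describe rather than a naive uniformity claim.
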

\begin{proof}
This result was proved in \cite{lee2015saffron} for the choice of parameters $r=N$ and $p=3$. The extension of the result to general $r,p$ parameters is straight forward.
\end{proof}

For convenience the performance of the peeling decoder is analyzed independently of the bin decoder i.e., a peeling decoder is considered which assumes that the bin decoder is working accurately which will be referred to as \textit{oracle based peeling decoder}. Another simplification is that a pruned graph is considered where all the zero variable nodes and their respective edges are removed from the graph. Also the oracle based peeling decoder is assumed to decode a variable node if it is connected to a bin node with degree one or degree two with one of them already decoded, in an iterative fashion. Any right node with more than degree two is untouched by this oracle based peeling decoder. It is easy to verify that the original decoder with accurate bin decoding is equivalent to this simplified oracle based peeling decoder on a pruned graph.
\begin{definition}[Pruned graph ensemble]
Let the pruned graph ensemble $\tilde{\mc{G}}_l(N,K,M)$ be the set of all bipartite graphs obtained from removing a random $N-K$ subset of variable nodes from a graph from the ensemble $\mc{G}_{\ell}(N,M)$. Note that graphs from the pruned ensemble have $K$ variable nodes. 
\end{definition}	

Before we analyze the pruned graph ensemble let us define the right-node degree distribution (d.d) of an ensemble as $R(x)=\sum_{i}R_i x^i$ where $R_i$ is the probability that a right-node in any graph from the ensemble has degree $i$. Similarly the edge d.d $\rho(x)=\sum_{i}\rho_ix^{i-1}$ is defined where $\rho_i$ is the probability that a random edge in the graph is connected to a right-node of degree $i$. Note that the left-degree distribution is regular (i.e. $L(x)=x^\ell$) even for the pruned graph ensemble and hence is not specifically discussed.

\begin{lemma}[Edge d.d of Pruned graph]
\label{Lem:EdgeddSAFFRON}
For the pruned ensemble $\tilde{\mc{G}}_{\ell}(N,K,M)$, it was shown that in the limit $K,N\rightarrow\infty$, $\rho_{1}=e^{-\lambda}$ and $\rho_{2}=\lambda e^{-\lambda}$ where $\lambda=\ell/\ceps$ for $M=\ceps K$ for any constant $\ceps$. 
\end{lemma}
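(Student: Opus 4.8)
The plan is to compute the right-node (bin) degree distribution $R(x)=\sum_i R_i x^i$ first, and then read off the edge distribution from the standard identity $\rho_i = i R_i/\sum_j j R_j$: the fraction of edges incident to bins of degree $i$ equals $i R_i$, normalized by the mean bin degree. So the whole statement reduces to identifying $R_i$ and its mean.

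Pruning deletes variable nodes but preserves left-regularity, so $\tilde{\mc{G}}_\ell(N,K,M)$ has exactly $K$ variable nodes of degree $\ell$, hence $K\ell$ edges spread over $M=\ceps K$ bins. I would fix a bin $v$ and write its degree as $d_v = \sum_{u} \mathbf{1}\{u \text{ connects to } v\}$, summed over the $K$ surviving variable nodes. Since each variable node picks its $\ell$ bins uniformly at random and independently of the others, by symmetry each indicator is Bernoulli with parameter $\ell/M$, and the $K$ indicators are independent across nodes; thus $d_v \sim \mathrm{Bin}(K,\ell/M)$ exactly, with mean $K\ell/M = \ell/\ceps = \lambda$.

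Passing to the limit $K,N\to\infty$ with $M=\ceps K$ holds the product $K\cdot(\ell/M)=\lambda$ fixed while the trial count diverges and the per-trial probability $\ell/(\ceps K)\to 0$, so the Poisson limit theorem gives $R_i \to \lambda^i e^{-\lambda}/i!$ and mean bin degree $\sum_i i R_i = \lambda$. Substituting into $\rho_i = i R_i/\lambda$ and using $i/i! = 1/(i-1)!$ yields $\rho_i = \lambda^{i-1} e^{-\lambda}/(i-1)!$, so in particular $\rho_1 = e^{-\lambda}$ and $\rho_2 = \lambda e^{-\lambda}$, as claimed.

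The marginal computation above is clean; the step that needs care is upgrading it to the per-graph statement implicit in the subsequent density-evolution analysis, namely that in a \emph{typical} pruned graph the empirical fraction of degree-$i$ bins concentrates around $R_i$. The $M$ bin degrees are not independent, since they are coupled through the fixed edge total $K\ell$ (which in fact makes them negatively associated), so I would establish concentration via a second-moment / Chebyshev bound on the variance of the empirical count of degree-$i$ bins, showing it is $o(M^2)$; the negative association only helps the bound. This is the one genuinely probabilistic point, the remainder being the elementary Poisson limit and the deterministic node-to-edge degree conversion.
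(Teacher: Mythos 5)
Your proof is correct and follows essentially the same route the paper uses: the paper states this lemma without proof (deferring to the SAFFRON reference), but its proof of the analogous Lemma~\ref{Lem:EdgeDDPrunedGraph} is exactly your argument — identify the bin degree as a binomial with mean $\lambda=\ell/\ceps$, pass to the Poisson limit, and convert node to edge degrees via $\rho(x)=R'(x)/R'(1)$, i.e.\ $\rho_i=iR_i/\sum_j jR_j$. Your added remark on concentration of the empirical degree profile is a reasonable extra care point (the paper handles that deviation issue separately, inside Lemma~\ref{Lem:PeelingAnalysisLeftRegular}), but it is not needed for the lemma as stated, which only concerns the limiting ensemble-average coefficients.
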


\begin{lemma}
\label{Lem:PeelingAnalysisLeftRegular}
For the pruned graph ensemble $\tilde{\mc{G}}_{\ell}(N, K,M)$ the oracle-based peeling decoder fails to peel off atleast $(1-\epsilon)$ fraction of the variable nodes with exponentially decaying probability if $M\geq \ceps K$ where the required $\ceps$ and $\ell$ for various values of $\epsilon$ are given in Table. \ref{Table:constantsDE}.
\end{lemma}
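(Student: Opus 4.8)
The plan is to carry out a density-evolution (DE) analysis of the oracle-based peeling decoder on the pruned ensemble $\tilde{\mc{G}}_{\ell}(N,K,M)$, exactly as is done for peeling decoders over the binary erasure channel, and then to upgrade the resulting \emph{expected} statement into a high-probability one by a concentration argument. First I would exploit the fact that a random graph from $\tilde{\mc{G}}_{\ell}(N,K,M)$ is \emph{locally tree-like}: with probability approaching $1$, the depth-$2t$ neighbourhood of a fixed edge contains no cycle for any constant $t$. On such a tree the incoming messages are independent, so I can track a single scalar $x_t$, the probability that a randomly chosen edge still carries an \emph{unresolved} variable-to-check message after $t$ rounds, with the initialization $x_0=1$.

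The second step is to write down the DE recursion, and here the combinatorics differs from the textbook erasure-channel case because the bin decoder resolves only \emph{singletons} and \emph{resolvable double-tons}. A bin (right node) forwards a ``resolved'' message along an edge precisely when it has degree one, or it has degree two and its other neighbour is already resolved; degree-$\geq 3$ bins (multi-tons) never resolve. Using the edge degree distribution, the probability that a check-to-variable message is unresolved is therefore $1-\rho_1-\rho_2(1-x_t)$, and since the left degree is regular of degree $\ell$ a variable stays unresolved only if all its $\ell-1$ other incoming messages are unresolved. This yields
\begin{equation*}
x_{t+1}=\bigl[\,1-\rho_1-\rho_2(1-x_t)\,\bigr]^{\ell-1},\qquad x_0=1 .
\end{equation*}
Substituting $\rho_1=e^{-\lambda}$ and $\rho_2=\lambda e^{-\lambda}$ with $\lambda=\ell/\ceps$ from Lemma~\ref{Lem:EdgeddSAFFRON}, the node-perspective fraction of variables left unresolved at convergence is $[\,1-\rho_1-\rho_2(1-x_\infty)\,]^{\ell}$, where $x_\infty=\lim_{t}x_t$.

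Third, I would analyze this one-dimensional recursion. The map $f(x)=[\,1-e^{-\lambda}-\lambda e^{-\lambda}(1-x)\,]^{\ell-1}$ is monotone increasing on $[0,1]$, so iterating from $x_0=1$ produces a decreasing sequence converging to the largest fixed point $x_\infty$. For the design I would choose the pair $(\ell,\ceps)$ so that $f(x)<x$ on $(x_\infty,1]$ and the resulting node-erasure fraction $[\,1-\rho_1-\rho_2(1-x_\infty)\,]^{\ell}$ is strictly below $\epsilon$; tabulating the smallest admissible $\ceps$ for each $\ell$ and each target $\epsilon$ is exactly how the entries of Table~\ref{Table:constantsDE} are produced. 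This is a finite numerical verification rather than a calculation to grind through here.

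Finally, I would convert the mean behaviour predicted by DE into the claimed exponentially decaying failure probability. Because the peeling process reveals the graph edge by edge and each revealed edge changes the number of eventually-unpeeled nodes by a bounded amount, an edge-exposure martingale together with the Azuma--Hoeffding inequality (equivalently Wormald's differential-equation method, as used for peeling decoders in \cite{richardson2008modern,li2015subisit}) shows that the actual fraction of unresolved nodes concentrates around its DE value with deviation probability decaying exponentially in $K$. The hard part will be making this last step fully rigorous: one must control the discrepancy between the finite-$t$ tree recursion and the genuine, cycle-containing finite graph, argue that stopping after $t=\mc{O}(\log(1/\epsilon))$ rounds already reaches within $\epsilon$ of $x_\infty$, and verify the bounded-difference condition uniformly so that the concentration radius is $o(\epsilon K)$ and the unresolved fraction stays below $\epsilon$ except with exponentially small probability.
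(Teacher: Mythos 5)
Your proposal follows essentially the same route as the paper's (sketched) proof: the identical density-evolution recursion $p_{j+1}=\left[1-(\rho_1+\rho_2(1-p_j))\right]^{\ell-1}$ with the limiting $\rho_1=e^{-\lambda}$, $\rho_2=\lambda e^{-\lambda}$, numerical selection of $(\ell,\ceps)$ so that the limiting unresolved fraction falls below $\epsilon$, and a standard martingale/Wormald-type concentration argument for the exponentially decaying deviation probability, which the paper simply delegates to \cite{lee2015saffron,richardson2008modern}. The extra detail you supply (tree-like neighbourhoods, monotonicity of the DE map, and the node- versus edge-perspective distinction) is consistent with, and somewhat more careful than, the paper's two-line sketch.
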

\begin{proof}
Instead of reworking the whole proof here from \cite{lee2015saffron}, we will list the main steps involved in the proof which we will use further along. Let $p_j$ be the probability that a random defective item is not identified at iteration $j$ of the decoder, in the limit $N \text{ and } K\rightarrow \infty$. Then one can write the density evolution (DE) equations relating $p_{j+1}$ to $p_{j}$ as 
\begin{align*}
p_{j+1}=\left[1-(\rho_1+\rho_2(1-p_j))\right]^{\ell-1}.
\end{align*}
For this DE, we can see that $0$ is not a fixed point and hence $p_j\nrightarrow 0$ as $j\rightarrow\infty$. Therefore numerically optimizing the values of $\ceps$ and $\ell$ such that $\lim_{j\rightarrow\infty}p_j\leq \epsilon$ gives the optimal values for $\ceps$ and $\ell$ given in Table. \ref{Table:constantsDE}. It was also shown \cite{lee2015saffron,richardson2008modern} that for such sparse graph systems the actual fraction of the undecoded variable nodes deviates from the average undecoded fraction of the variable nodes given by the DE with exponentially low probability. 
\end{proof}

\begin{table}[t]
\centering
\begin{tabular}{| c | c | c | c | c | c | c | c | }
\hline
$\epsilon$ & $10^{-3}$ & $10^{-4}$ & $10^{-5}$ & $10^{-6}$ &$ 10^{-7}$ & $10^{-8}$ & $10^{-9}$ \\ \hline
$c_1(\epsilon)$ & 6.13 & 7.88 & 9.63 & 11.36 & 13.10 & 14.84 & 16.57 \\ \hline
 $\ell$ & 7 & 9 & 10 & 12 & 14 & 15 & 17 \\ \hline
\end{tabular}
\vspace{1ex}
\caption{Constants for various error floor values}
\label{Table:constantsDE}
\end{table}

Combining the lemmas and remarks above, the main result from \cite{lee2015saffron} can be summarized as below.
\begin{theorem}
A random testing matrix from the SAFFRON scheme with $m=6\ceps K \log_{2}N$ tests recovers atleast $(1-\epsilon)$ fraction of the defective items w.h.p of atleast $1-O(\frac{K}{N^2})$. The computational complexity of the decoding scheme is $O(K\log N)$. The constant $\ceps$ is given in Table. \ref{Table:constantsDE} for some values of $\epsilon$.
\end{theorem}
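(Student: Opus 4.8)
The plan is to assemble the theorem from the three preceding lemmas, where essentially all the analytic content already resides; the only genuine work is to argue that the true two-stage decoder (bin decoder plus peeling decoder) coincides with the idealized \emph{oracle-based peeling decoder} on the pruned graph, except on a low-probability event, so that Lemma~\ref{Lem:PeelingAnalysisLeftRegular} applies verbatim. First I would fix the SAFFRON parameters: take the signature matrix from $\mbf{U}_{r,p}$ with $r=N$ and $p=3$, so each bin emits $h=6\log_2 N$ outputs, and set the number of bins to $M=\ceps K$ with $\ell$ as prescribed in Table~\ref{Table:constantsDE}. The test count is then immediate, $m=M\cdot h=6\ceps K\log_2 N$.

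The central step is to bound the bin-decoder errors. By Lemma~\ref{Lem:BinDecoderAnalysis}, every singleton and every resolvable double-ton is resolved exactly, and each multi-ton is misclassified as a singleton or resolvable double-ton with probability at most $r^{-(p-1)}=N^{-2}$. I would union-bound this failure over all $M=\ceps K$ bins to conclude that, with probability at least $1-\ceps K/N^2=1-O(K/N^2)$, no bin in the entire graph ever returns a false hypothesis. On this good event, the genuine decoder behaves identically to the oracle-based peeling decoder on the pruned graph, exactly as asserted in the discussion preceding the pruned-graph definition.

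Conditioned on this good event, I would then invoke the density-evolution argument. Feeding the edge degree distribution of Lemma~\ref{Lem:EdgeddSAFFRON} into the recursion $p_{j+1}=\left[1-(\rho_1+\rho_2(1-p_j))\right]^{\ell-1}$, Lemma~\ref{Lem:PeelingAnalysisLeftRegular} guarantees that the oracle peeler leaves undecoded at most an $\epsilon$ fraction of the $K$ defective variable nodes, and that the realized undecoded fraction concentrates around this DE limit with exponentially small deviation probability. A final union bound over the no-false-hypothesis event and the concentration failure event gives overall success probability at least $1-O(K/N^2)$, since in the operative regime $K=o(N)$ with $K,N\to\infty$ the exponentially small peeling deviation is absorbed into the $O(K/N^2)$ term. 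For the complexity claim I would note that the peeling decoder processes each of the $K$ defective nodes a constant number of times, and each such step runs the bin decoder on its $\ell$ incident bins at cost $O(h)=O(\log N)$ per bin; since $\ell$ is a fixed constant, the total running time is $O(K\ell\log N)=O(K\log N)$.

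The part requiring the most care is the transfer step: making precise that conditioning on the event ``no multi-ton produces a false hypothesis'' does not corrupt the randomness used in the DE concentration bound. Because the bin-decoder failures are events on the signature/measurement side (the random permutations $\pi^k$ and the boolean outputs), whereas the peeling analysis concerns only the random graph structure from $\tilde{\mc{G}}_{\ell}(N,K,M)$, I expect these two sources of randomness to be independent, so the two bounds combine by a clean union bound. Verifying that this decoupling is legitimate—and confirming the harmless interplay when a falsely-declared bin would otherwise have been peeled—is the one place where I would spell out the details rather than cite the earlier lemmas.
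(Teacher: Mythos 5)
Your proposal is correct and follows essentially the same route as the paper, which states this theorem as a direct combination of Lemma~\ref{Lem:BinDecoderAnalysis} and Lemma~\ref{Lem:PeelingAnalysisLeftRegular} and, in its own analogous Theorem~\ref{Thm:NoiselessMain}, uses exactly your decomposition $P_e \leq \tx{Pr}(E_1) + \tx{Pr}(E_2)$ with a union bound on bin-decoder failures. The only small imprecision is that the union bound should run over bin-decoder \emph{invocations} rather than over the $M$ bins (a multi-ton bin may be probed several times as a candidate resolvable double-ton, once per already-recovered neighbor, giving at most $M + K\ell$ invocations as the paper counts it via the $K\ell$ edges of the pruned graph), but since both counts are $\Theta(K)$ the resulting $O(K/N^2)$ bound is unchanged.
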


\section{Proposed Scheme}
The main difference between the SAFFRON scheme described in Sec.~\ref{Sec:PriorWork} and our proposed scheme is that we use a left-and-right-regular sparse-graph instead of left-regular sparse-graph in the first stage for the binning operation.

\begin{definition}[Left-and-right-regular sparse graph ensemble]
We define $\mc{G}_{\ell,r}(N,M)$ to be the ensemble of left-and-right-regular graphs where the $N\ell$ edge connections from the left and $Mr(=N\ell)$ edge connections from the right are paired up according to a permutation $\pi$ chosen at random from $S_{N\ell}$. 
\end{definition}

 Let $\mbf{T}_{G}\in\{0,1\}^{M\times N }$ be the adjacency matrix corresponding to a graph $G\in\mc{G}_{\ell,r}(N,M)$ i.e., each column in $\mbf{T}_{G}$ corresponding to a variable node has exactly $\ell$ ones and each row corresponding to a bin node has exactly $r$ ones. And let the universal signature matrix be $\mbf{U}\in\{0,1\}^{h \times r}$ chosen from the $\mbf{U}_{r,p}$ ensemble. Then the overall testing matrix $\mbf{A}\coleq [\mbf{A}_{1}^{T},\ldots,\mbf{A}_{M}^{T}]^T$ where $\mbf{A}_{i}\in\{0,1\}^{h\times N}$ defining the $h$ tests at $i^{\text{th}}$ bin is given by
 \begin{align}
 \mbf{A}_i&=[\mbf{0},\ldots,\mbf{0},\mbf{u}_1, \mbf{0},\ldots, \mbf{u}_2,\mbf{0}, \ldots, \mbf{u}_{r}],\quad \text{where}\label{Eqn:TestingMatrixDefn}\\
\mbf{t}_i &= [0,\ldots,0,\hspace{0.6ex}1,\hspace{0.9ex} 0, \ldots,\hspace{0.6ex}1,\hspace{0.9ex}0, \ldots, \hspace{0.9ex}1].\nonumber
 \end{align}
Note that $\mbf{A}_i$ is defined by placing the $r$ columns of $\mbf{U}$ at the $r$ non-zero indices of $\mbf{t}_i$ and the remaining are padded with zero columns. We can observe that the total number of tests for this scheme is $m=M\times h$ where $h=2p\log_2 r$.

\begin{Example}\label{exmp:tensor}
Let us look at an example for $(N,M)=(6,3)$ and $(\ell,r)=(2,4)$. Then the adjacency matrix $\mbf{T}_G$ of a graph $G\in\mathcal{G}_{2,4}(6,3)$ and a signature matrix $\mbf{U}\in \{0,1\}^{4\times 3}$ for $p=1$ and $\log_2 r = 2$ are given by
 \[
 \mbf{T}_G = \begin{bmatrix}
1 & 1 & 0 & 1 & 0 & 1    \\
0 & 1 & 1 & 1& 1 & 0 \\
1 & 0 & 1 & 0 & 1 & 1
\end{bmatrix}
\mbf{U} = \begin{bmatrix}
0 & 0 & 1  & 1\\
0 & 1 & 0  & 1\\
1 & 1 & 0  & 0\\
1 & 0 & 1  & 0
\end{bmatrix}.
 \] 
Then, the measurement matrix $\bf A	$ with $m = 2pM\ceil{\log_2 r}=12$ tests is given by
\[ \mbf{A} = \begin{bmatrix}
0 & 0 & 0 & 1 & 0 & 1\\
0 & 1 & 0 & 0 & 0 & 1\\
1 & 1 & 0 & 0 & 0 & 0\\
1 & 0 & 0 & 1 & 0 & 0\\
0 & 0 & 0 & 1& 1 & 0 \\
0 & 0 & 1 & 0& 1 & 0 \\
0 & 1 & 1 & 0& 0 & 0 \\
0 & 1 & 0 & 1& 0 & 0 \\
0 & 0 & 0 & 0 & 1 & 1\\
0 & 0 & 1 & 0 & 0 & 1\\
1 & 0 & 1 & 0 & 0 & 0\\
1 & 0 & 0 & 0 & 1 & 0\\
\end{bmatrix}
 \]
 \end{Example}

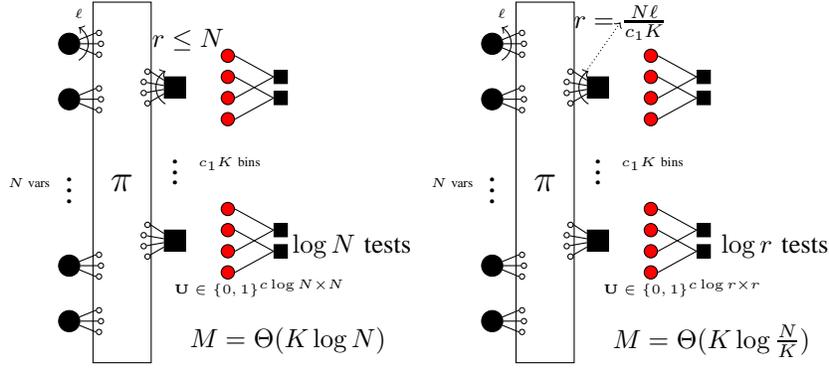
\begin{figure*}[t!]
\centering \scalebox{1}{\begin{tikzpicture}
\def\emphsize{\normalsize}
\def\horzgap{40pt}; 
\def \gapVN{21pt}; 
\def \gapCN{	29pt}; 

\def \textoffs{9pt}; 
\def\nodewidth{8pt};
\def\nodewidthsm{5pt}
\def\edgewidth{2pt};
\def\ext{12pt};
\def\moveX {20pt};
\def\moveYa{4pt}
\def\moveYb{12pt}
\def\arcradius{0.6*\ext}

\def\xmovement{4*\horzgap}

\def \n {8};
\def\ldeg{3};
\def \m {4};
\def\rdeg{6};
\def\langle{40};
\def\langle{20};

\tikzstyle{check} = [rectangle, draw,  inner sep=0mm, fill=black,minimum height=\nodewidth,minimum width=\nodewidth]
\tikzstyle{checksm} = [rectangle, draw, inner sep=0mm, fill=black,minimum height=\nodewidthsm, minimum width=\nodewidthsm]

\tikzstyle{bit} = [circle, draw, inner sep=0mm, fill=black, minimum size=\nodewidth]
\tikzstyle{bitsm} = [circle, draw, inner sep=0mm,fill=red, minimum size=\nodewidthsm]
\tikzstyle{edgesock} = [circle, inner sep=0mm, minimum size=\edgewidth,draw, fill=white]

\foreach \vn in {2,3,6,7}{
 \node[bit] (vn\vn) at (0,\vn*\gapVN) {};
\path (vn\vn) ++(20:\ext)node (evA\vn) [edgesock] {};
\path (vn\vn) ++(0:\ext) node (evB\vn) [edgesock] {};
\path (vn\vn) ++(-20:\ext) node (evC\vn) [edgesock] {};

  \draw (vn\vn) -- (evA\vn.west); 
  \draw (vn\vn) -- (evB\vn.west); 
  \draw (vn\vn) -- (evC\vn.west); 
}
\draw[->,thin] (vn7)++(-60:\arcradius) arc (-60:60:\arcradius)node[above]{\tiny{$\ell$}};

\path (vn3)--node(vndots) {\Large{$\vdots$}} (vn6);
\node[left =-0.2*\nodewidth of vndots](){\tiny{$N$ vars}};

\foreach \cn in {2,4}{
\node[check] (cn\cn) at (\horzgap,0.2in+\cn*\gapCN) {};

\path (cn\cn) ++(150:\ext) node (ecA\cn) [edgesock] {};
\path (cn\cn) ++(170:\ext) node (ecB\cn) [edgesock] {};
\path (cn\cn) ++(190:\ext) node (ecC\cn) [edgesock] {};
\path (cn\cn) ++(210:\ext) node (ecD\cn) [edgesock] {};

  \draw (cn\cn) -- (ecA\cn.east); 
  \draw (cn\cn) -- (ecB\cn);  
  \draw (cn\cn) -- (ecC\cn); 
    \draw (cn\cn) -- (ecD\cn); 
}

\draw[->,thin] (cn4)++(240:\arcradius) arc (240:120:\arcradius);
\path(cn4)++(5pt,2*\textoffs)node(){\emphsize{$r\leq N$}};
\path (cn2)--node(cndots) {\Large{$\vdots$}} (cn4);
\node [right=0.01*\nodewidth of cndots]{\tiny{$c_1K$ bins}};

\node[draw,minimum width=\horzgap-1.5*\ext,minimum height=6.5*\gapVN](perm) at (0.5*\horzgap,4.5*\gapVN){\Large{$\pi$}};

\foreach \cn in {2,4}{
\path (cn\cn) ++(\moveX,\moveYb) node (bitnA\cn) [bitsm] {};
\path (cn\cn) ++(\moveX,\moveYa) node (bitnB\cn) [bitsm] {};
\path (cn\cn) ++(\moveX,-\moveYa) node (bitnC\cn) [bitsm] {};
\path (cn\cn) ++(\moveX,-\moveYb) node (bitnD\cn) [bitsm] {};

\path (bitnB\cn) ++(\moveX,0) node (checknB\cn) [checksm] {};
\path (bitnC\cn) ++(\moveX,0) node (checknC\cn) [checksm] {};

\draw (bitnA\cn.east)--(checknB\cn.west);
\draw (bitnC\cn.east)--(checknB\cn.west);
\draw (bitnB\cn.east)--(checknC\cn.west);
\draw (bitnD\cn.east)--(checknC\cn.west);
}
\path(bitnD2)++(0.6*\moveX,-0.3*\moveX) node(){\tiny{$\mathbf{U}\in\{0,1\}^{c\log N \times N }$}};

\path (checknC2)++(3*\textoffs,0.5*\moveYa)node{\emphsize{$\log N$ tests}};
\node [below left=3*\nodewidth and 2*\nodewidth of bitnD2,anchor=west]{\emphsize{$M=\Theta(K\log N)$} };

\foreach \vn in {2,3,6,7}{
 \node[bit] (Bvn\vn) at (\xmovement,\vn*\gapVN) {};
\path (Bvn\vn) ++(20:\ext)node (BevA\vn) [edgesock] {};
\path (Bvn\vn) ++(0:\ext) node (BevB\vn) [edgesock] {};
\path (Bvn\vn) ++(-20:\ext) node (BevC\vn) [edgesock] {};

  \draw (Bvn\vn) -- (BevA\vn.west); 
  \draw (Bvn\vn) -- (BevB\vn.west); 
  \draw (Bvn\vn) -- (BevC\vn.west); 
}
\draw[->,thin] (Bvn7)++(-60:\arcradius) arc (-60:60:\arcradius)node[above]{\tiny{$\ell$}};

\path (Bvn3)--node(Bvndots) {\Large{$\vdots$}} (Bvn6);
\node[left =-0.2*\nodewidth of Bvndots](){\tiny{$N$ vars}};

\foreach \cn in {2,4}{
\node[check] (Bcn\cn) at (\xmovement+\horzgap,0.2in+\cn*\gapCN) {};

\path (Bcn\cn) ++(150:\ext) node (BecA\cn) [edgesock] {};
\path (Bcn\cn) ++(170:\ext) node (BecB\cn) [edgesock] {};
\path (Bcn\cn) ++(190:\ext) node (BecC\cn) [edgesock] {};
\path (Bcn\cn) ++(210:\ext) node (BecD\cn) [edgesock] {};

  \draw (Bcn\cn) -- (BecA\cn.east); 
  \draw (Bcn\cn) -- (BecB\cn);  
  \draw (Bcn\cn) -- (BecC\cn); 
  \draw (Bcn\cn) -- (BecD\cn); 
}

\draw[->,thin] (Bcn4)++(240:\arcradius) arc (240:120:\arcradius);
\draw[->,densely dotted](Bcn4)++(140:\arcradius)--++(14pt,20pt)node[](){\emphsize{$r=\frac{N\ell}{c_1K}$}};
\path (Bcn2)--node(Bcndots) {\Large{$\vdots$}} (Bcn4);
\node [right=0.01*\nodewidth of Bcndots]{\tiny{$c_1K$ bins}};

\node[draw,minimum width=\horzgap-1.5*\ext,minimum height=6.5*\gapVN](Bperm) at (\xmovement+0.5*\horzgap,4.5*\gapVN){\Large{$\pi$}};

\foreach \cn in {2,4}{
\path (Bcn\cn) ++(\moveX,\moveYb) node (BbitnA\cn) [bitsm] {};
\path (Bcn\cn) ++(\moveX,\moveYa) node (BbitnB\cn) [bitsm] {};
\path (Bcn\cn) ++(\moveX,-\moveYa) node (BbitnC\cn) [bitsm] {};
\path (Bcn\cn) ++(\moveX,-\moveYb) node (BbitnD\cn) [bitsm] {};

\path (BbitnB\cn) ++(\moveX,0) node (BchecknB\cn) [checksm] {};
\path (BbitnC\cn) ++(\moveX,0) node (BchecknC\cn) [checksm] {};

\draw (BbitnA\cn.east)--(BchecknB\cn.west);
\draw (BbitnC\cn.east)--(BchecknB\cn.west);
\draw (BbitnB\cn.east)--(BchecknC\cn.west);
\draw (BbitnD\cn.east)--(BchecknC\cn.west);
}
\path(BbitnD2)++(0.6*\moveX,-0.3*\moveX) node(){\tiny{$\mathbf{U}\in\{0,1\}^{c\log r\times r}$}};

\path (BchecknC2)++(3*\textoffs,0.5*\moveYa)node{\emphsize{$\log r$ tests}};
\node [below left=3*\nodewidth and 2*\nodewidth of BbitnD2,anchor=west]{\emphsize{$M=\Theta(K \log \frac{N}{K})$} };

\end{tikzpicture}}
\caption{Illustration of the main differences between SAFFRON \cite{lee2015saffron} on the left and our regular-SAFFRON scheme on the right. In both the schemes the peeling decoder on sparse graph requires $\Theta(K)$ bins. But for the bin decoder part, in SAFFRON scheme the right degree is a random variable with a maximum value of $N$ and thus requires $\Theta(\log N)$ tests at each bin. Whereas our scheme based on right-regular sparse graph has a constant right degree of $\Theta(\frac{N}{K})$ and thus requires only $\Theta(\log \frac{N}{K})$ tests at each bin. Thus we can improve the number of tests from $\Theta (K\log N)$ to $\Theta(K\log \frac{N}{K})$.}
\end{figure*}

\begin{definition}[Regular-SAFFRON]
\label{Def:RegSaffron}
Let the ensemble of testing matrices be $\mc{G}_{\ell,r}(N,M)\times \mbf{U}_{r,p}$ where a graph $G$ from $\mc{G}_{\ell,r}(N,M)$ and a signature matrix $\mbf{U}$ from $\mbf{U}_{r,p}$ are chosen at random and the testing matrix $\mbf{A}$ is defined according to Eq. \eqref{Eqn:TestingMatrixDefn}. Note that the total number of tests is $2pM\log_2 r$ where $r=\frac{N\ell}{M}$.
\end{definition}

For the regular-SAFFRON testing ensemble defined in Def. \ref{Def:RegSaffron}, we employ the iterative decoder described in Sec.~\ref{Sec:PriorWork}. Similar to the SAFFRON scheme we will analyze the peeling decoder and the bin decoder separately and union bound the total error probability of the decoding scheme. As we have already mentioned the analysis of just the peeling decoder part can be carried out by considering a \textit{simplified oracle-based peeling decoder} on a pruned graph with only the non-zero variable nodes remaining. 

\begin{definition}[Pruned graph ensemble]
We will define the pruned graph ensemble $\tilde{\mc{G}}_{\ell,r}(N,K,M)$ as the set of all graphs obtained from removing a random $N-K$ subset of variable nodes from a graph from left-and-right-regular sparse-graph ensemble $\mc{G}_{\ell,r}(N,M)$.
\end{definition}

Note that graphs from the pruned ensemble have $K$ variable nodes with a degree $\ell$ whereas the right degree is not regular anymore. 

\begin{lemma}[Edge d.d of pruned graph]
\label{Lem:EdgeDDPrunedGraph}
For the pruned graph ensemble $\tilde{\mc{G}}_{\ell,r}(N,K,M)$ it can be shown in the limit $K,N\rightarrow\infty$ and $K=o(N)$ that the edge d.d coefficients approach $\rho_{1}=e^{-\lambda}$ and $\rho_{2}=\lambda e^{-\lambda}$ where $\lambda=\ell/c$ for the choice of $M=cK$, $c$ being some constant.
\end{lemma}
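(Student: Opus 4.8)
The plan is to mirror the argument behind Lemma~\ref{Lem:EdgeddSAFFRON} for the left-regular case, the only structural change being that here the unpruned bin degree is the deterministic constant $r=N\ell/M$ rather than a random variable; it is the pruning that turns it into a nontrivial (asymptotically Poisson) degree. Since the left degree $\ell$ is untouched by pruning, it suffices to determine the right-node degree distribution $R_i$ of the pruned ensemble $\tilde{\mc{G}}_{\ell,r}(N,K,M)$ and then pass to the edge distribution through the standard relation $\rho_i = iR_i/\sum_j jR_j$. First I would work in the permutation (configuration) model: the $N\ell$ left sockets are matched to the $Mr=N\ell$ right sockets by a uniform $\pi\in S_{N\ell}$, and pruning retains a uniformly random set of $K$ variable nodes, i.e. a uniformly random set of $K\ell$ of the $N\ell$ left sockets.

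Fix one bin node with its $r$ right sockets. Its degree in the pruned graph is the number of those $r$ sockets that $\pi$ matches to one of the $K\ell$ retained left sockets. By exchangeability of $\pi$ this count is hypergeometric, so (up to the multi-edge correction discussed below)
\begin{equation*}
R_d \;=\; \frac{\binom{K\ell}{d}\binom{N\ell-K\ell}{\,r-d\,}}{\binom{N\ell}{r}}, \qquad \mathbb{E}[\deg] = r\cdot\frac{K\ell}{N\ell} = \frac{\ell}{c} = \lambda ,
\end{equation*}
where I used $M=cK$ and $r=N\ell/M$. Next I would take the limit. Because $K=o(N)$ we have $r=N\ell/(cK)\to\infty$, while the per-socket success probability $K\ell/(N\ell)=K/N\to0$, with their product pinned at $\lambda=\ell/c$. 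This is exactly the regime in which the hypergeometric law converges to $\mathrm{Poisson}(\lambda)$: first replace sampling-without-replacement by sampling-with-replacement (vanishing error as $N\to\infty$), then apply the Poisson limit theorem to $\mathrm{Binomial}(r,K/N)$. Hence $R_d\to \lambda^d e^{-\lambda}/d!$, and the mean degree tends to $\lambda$.

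Converting to the edge distribution, $\rho_i = iR_i/\lambda \to \lambda^{i-1}e^{-\lambda}/(i-1)!$, which yields $\rho_1=e^{-\lambda}$ and $\rho_2=\lambda e^{-\lambda}$, as claimed. Equivalently, and perhaps more cleanly, I would compute $\rho$ directly from an edge-rooted view: conditioning on a uniformly random retained left socket, its bin has $r-1$ remaining sockets distributed as $\text{hypergeometric}(N\ell-1,K\ell-1,r-1)\to\mathrm{Poisson}(\lambda)$, so the rooted bin degree is $1+\mathrm{Poisson}(\lambda)$ and the formula for $\rho_i$ drops out immediately.

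The main obstacle is the rigorous justification of the two approximations hidden above. The first is the multi-edge issue: the hypergeometric count tallies edges into the retained set with multiplicity, whereas the genuine bin degree counts \emph{distinct} retained neighbours (a bin may meet the same variable node through two of its $\ell$ sockets). I expect to bound the expected number of such double-edges at a fixed bin, summed over the $K$ retained variable nodes, by $O\!\big(K\binom{r}{2}\binom{\ell}{2}/\binom{N\ell}{2}\big)=O(\lambda^2/K)\to0$, so multi-edges do not affect the limiting $R_i$ or $\rho_i$. The second is the with/without-replacement and Poissonization errors, which I would control by standard total-variation bounds that are $O(rK/N)$-small precisely because $K=o(N)$; this is also the step where the hypothesis $K=o(N)$ (as opposed to $K=\Theta(N)$) is genuinely used, since it is what simultaneously drives $r\to\infty$ and the success probability to $0$ while keeping their product fixed at $\lambda$.
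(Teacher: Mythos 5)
Your proof is correct and follows essentially the same route as the paper: compute the right-node degree distribution of the pruned ensemble, convert to the edge distribution via $\rho_i = iR_i/\sum_j jR_j$, and take the limit $r\to\infty$, $K/N\to 0$ with $rK/N=\lambda$ held fixed. The only difference is that the paper simply asserts that pruning is asymptotically equivalent to deleting each edge independently with probability $1-K/N$ (so that $R(x)=(\beta x+(1-\beta))^r$ is binomial), whereas you derive the exact hypergeometric law from the configuration model and explicitly justify the binomial/Poisson passage and the multi-edge correction --- a welcome tightening of a step the paper leaves heuristic.
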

\begin{proof}
We will first derive $R(x)$ for the pruned graph ensemble and then use the relation $\rho(x)=\frac{R'(x)}{R'(1)}$\cite{richardson2008modern}  to derive the edge d.d. Note that all the bin nodes have a uniform degree $r$ before pruning. In the pruning operation we are removing a $N-K$ subset of variable nodes at random which means from the bin node perspective, in an asymptotic sense, this is equivalent to removing each connected edge with a probability $1-\beta$ where $\beta\coleq \frac{K}{N}$. Under this process the right-node d.d can be written as
\begin{align}
R_1&=r\beta(1-\beta)^{r-1},\quad \text{ and similarly}\label{Eqn:Deg1ChkDistribution}\\
R_i &=\binom{r}{i} \beta^{i}(1-\beta)^{r-i} ~~\forall i<=r \nonumber
\end{align}
thus giving us $R(x)=(\beta x+(1-\beta))^{r}$. This gives us 
\begin{align*}
\rho(x)&=\frac{r\beta(\beta x+(1-\beta))^{r-1}}{r\beta}\\
          &=(\beta x+(1-\beta))^{r-1}.
\end{align*}
Thus we can compute that $\rho_1=(1-\beta)^{r-1}$ and $\rho_2=(r-1)\beta(1-\beta)^{r-2}$. For $M=c K$ we evaluate these quantities in the limit $K,N\rightarrow \infty$ as
\begin{align*}
\lim_{K,N\rightarrow \infty} \rho_1&=\lim_{K,N\rightarrow \infty} \left(1-\frac{K}{N}\right)^{\frac{N\ell}{c K}-1}\\
&=e^{-\lambda} \qquad \text{ where } \lambda=\frac{\ell}{c}
\end{align*}
Similarly we can show $\lim_{K,N\rightarrow \infty}\rho_2=\lambda e^{-\lambda}$.
\end{proof}

Note that even if our initial ensemble is left-and-right-regular the pruned graph ensemble has asymptotically the same degree distribution as in the SAFFRON scheme where the initial ensemble is left-regular.

\begin{lemma}
\label{Lem:PeelingRegularAnalysis}
For the pruned graph ensemble $\tilde{\mc{G}}_{\ell,r}(N, K,M)$ the oracle-based peeling decoder fails to peel off atleast $(1-\epsilon)$ fraction of the variable nodes with exponentially decaying probability for $M=\ceps K$ where $\ell, \ceps$ for various $\epsilon$ is given in Table. \ref{Table:constantsDE}.
\end{lemma}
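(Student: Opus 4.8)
The plan is to reduce the analysis entirely to the density-evolution (DE) argument already carried out for the left-regular pruned ensemble in Lemma~\ref{Lem:PeelingAnalysisLeftRegular}, exploiting the fact that the oracle-based peeling decoder ``sees'' the graph only through its left degree and its edge degree distribution. First I would set up the edge-perspective DE for the oracle decoder on $\tilde{\mc{G}}_{\ell,r}(N,K,M)$. Tracking $p_j$, the probability that a randomly chosen variable node remains un-peeled after $j$ iterations, one follows an outgoing edge to a bin and asks whether that bin can resolve the node: this happens if the bin is a singleton (edge-probability $\rho_1$) or a degree-two bin whose other neighbor is already decoded (edge-probability $\rho_2(1-p_j)$). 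Since a node stays un-peeled exactly when none of its remaining $\ell-1$ incident bins can resolve it, the recursion is
\begin{align*}
p_{j+1}=\left[1-\left(\rho_1+\rho_2(1-p_j)\right)\right]^{\ell-1},
\end{align*}
identical in form to the left-regular case because the left degree of the pruned graph is still the constant $\ell$.

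The key observation is that this DE depends on the ensemble only through $\ell,\rho_1,\rho_2$. By Lemma~\ref{Lem:EdgeDDPrunedGraph}, with the choice $M=\ceps K$ so that $\lambda=\ell/\ceps$, the pruned left-and-right-regular ensemble has $\rho_1\to e^{-\lambda}$ and $\rho_2\to\lambda e^{-\lambda}$ in the limit $K,N\to\infty$ with $K=o(N)$ --- exactly the values of Lemma~\ref{Lem:EdgeddSAFFRON} for SAFFRON. Hence the DE recursion, its fixed points, and the numerical optimization of $\ceps$ and $\ell$ subject to $\lim_{j\to\infty}p_j\le\epsilon$ coincide with those of Lemma~\ref{Lem:PeelingAnalysisLeftRegular}, so the same constants in Table~\ref{Table:constantsDE} guarantee that the DE-predicted un-peeled fraction is at most $\epsilon$.

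It then remains to upgrade this statement about the expected (DE-predicted) fraction to an exponentially strong concentration statement about the actual un-peeled fraction on a typical graph from the ensemble. I would invoke the standard sparse-graph concentration machinery --- either a node/edge-exposure Azuma--Hoeffding martingale or Wormald's differential-equation method --- which shows that the realized fraction deviates from the DE value by more than a fixed constant with probability decaying exponentially in $K$, provided that (i) the decoder's local neighborhoods are tree-like up to the relevant depth and (ii) each exposed edge perturbs the final count by a bounded amount. I expect the main obstacle to lie precisely here: unlike the left-regular ensemble of \cite{lee2015saffron}, our graph is drawn from the configuration model on $N\ell$ sockets with a right degree $r=N\ell/(\ceps K)$ that \emph{grows} with $N$, so the generic concentration theorems for bounded-degree graphs do not apply verbatim. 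The resolution is that the decoder operates on the \emph{pruned} graph, whose right degree is $\mathrm{Binomial}(r,\beta)\to\mathrm{Poisson}(\lambda)$ with bounded mean $\lambda$; after pruning the local neighborhood converges to a finite Galton--Watson tree and the per-edge influence on the peeling outcome is bounded, so the hypotheses of the concentration theorem are met and the same exponential tail as in \cite{lee2015saffron,richardson2008modern} follows.
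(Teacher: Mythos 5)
Your proposal is correct and follows essentially the same route as the paper: it reduces the claim to the density-evolution recursion of Lemma~\ref{Lem:PeelingAnalysisLeftRegular} via the edge-degree-distribution convergence established in Lemma~\ref{Lem:EdgeDDPrunedGraph}, and then appeals to the standard sparse-graph concentration results. You are in fact more careful than the paper about the concentration step --- the paper simply cites \cite{lee2015saffron,richardson2008modern}, whereas you explicitly identify and resolve the one point where the left-and-right-regular ensemble could differ (the unpruned right degree $r$ grows with $N$, but the pruned right degree is asymptotically Poisson with bounded mean $\lambda$, so the bounded-influence hypotheses still hold).
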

\begin{proof}
We showed in Lemma. \ref{Lem:EdgeDDPrunedGraph} that, in the limit of $K,N\rightarrow\infty$, the edge degree distribution coefficients $\rho_1$ and $\rho_2$ approach the same values as in the SAFFRON scheme (see Lem. \ref{Lem:EdgeddSAFFRON}). Now we follow the exact same approach as that of Lem. \ref{Lem:PeelingAnalysisLeftRegular} where the limiting values of $\rho_1=e^{-\lambda}$ and $\rho_2=\lambda e^{-\lambda}$ are used in the DE equations to show that for the given values of $\ell$ and $\ceps$ $\lim_{j\rightarrow\infty}p_j \leq \epsilon$. 
\end{proof}

\begin{theorem}
\label{Thm:NoiselessMain}
Let $p\in\Z$ such that $K=o(N^{1-1/p})$. A random testing matrix from the proposed regular SAFFRON ensemble $\mc{G}_{\ell,\frac{N\ell}{\ceps K}}(N,\ceps K)\times \mbf{U}_{\frac{N\ell}{\ceps K},p}$ with $m=c\cdot K\log_{2}\frac{c_2 N}{K}$ tests recovers atleast $(1-\epsilon)$ fraction of the defective items w.h.p. The computational complexity of the decoding scheme is $\mc{O}(K\log \frac{N}{K})$. The constants are $c=2p\ceps, c_2=\frac{\ell}{\ceps}$ where $\ell$ and $\ceps$ for various values of $\epsilon$ are given in Table. \ref{Table:constantsDE}.
\end{theorem}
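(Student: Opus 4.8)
The plan is to follow the template established for SAFFRON and split the argument into three independent parts: counting the tests, bounding the probability of recovering fewer than a $(1-\epsilon)$ fraction of defectives, and accounting for the complexity. The three lemmas already in hand do almost all the analytic work; the theorem is mostly a matter of assembling them and pinning down the role of the hypothesis $K=o(N^{1-1/p})$.

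First I would read off the test count directly from Definition~\ref{Def:RegSaffron}. With $M=\ceps K$ bins and right-degree $r=\frac{N\ell}{M}=\frac{N\ell}{\ceps K}$, each bin contributes $h=2p\log_2 r$ measurements, so $m=Mh=2p\ceps K\log_2\frac{N\ell}{\ceps K}$. Setting $c=2p\ceps$ and $c_2=\ell/\ceps$ (so that $\frac{N\ell}{\ceps K}=\frac{c_2 N}{K}$) yields $m=cK\log_2\frac{c_2N}{K}$, exactly as claimed. Next I would decompose the failure event into (i) the \emph{oracle-based} peeling decoder on the pruned graph $\tilde{\mc{G}}_{\ell,r}(N,K,M)$ failing to peel a $(1-\epsilon)$ fraction, and (ii) the \emph{actual} bin decoder disagreeing with the oracle at some point during decoding. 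Event (i) has exponentially small probability by Lemma~\ref{Lem:PeelingRegularAnalysis}, which is valid because Lemma~\ref{Lem:EdgeDDPrunedGraph} shows the pruned ensemble has the same limiting edge distribution $\rho_1=e^{-\lambda},\rho_2=\lambda e^{-\lambda}$ as the left-regular case.

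For event (ii), the actual decoder follows the same trajectory as the oracle precisely as long as no multi-ton bin is ever misdeclared as a singleton or a resolvable double-ton. By Lemma~\ref{Lem:BinDecoderAnalysis} each such misdeclaration occurs with probability at most $r^{-(p-1)}$. Union-bounding over all bin-decoder queries -- of which there are $O(M+K\ell)=O(K)$, since there are $M=\ceps K$ singleton scans and each of the $K$ defective nodes triggers at most $\ell$ resolvable-double-ton queries across its incident bins -- gives
\[ \Pr[\text{bin-decoder error}]\;\le\;\frac{O(K)}{r^{\,p-1}}\;=\;O\!\left(K\left(\frac{\ceps K}{N\ell}\right)^{p-1}\right)\;=\;O\!\left(\frac{K^{p}}{N^{\,p-1}}\right). \]
This tends to zero iff $K^{p}=o(N^{p-1})$, i.e.\ $K=o(N^{1-1/p})$, which is exactly the hypothesis of the theorem; combining (i) and (ii) by a final union bound then gives recovery of a $(1-\epsilon)$ fraction w.h.p. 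For the complexity I would observe that the peeling proceeds through at most $K$ node-decodings, each inspecting the $\ell$ incident bins and running the bin decoder on $h=O(\log\frac{N}{K})$ measurements; since $\ell$ and $p$ are constants, the total work is $O(K\log\frac{N}{K})$.

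The main obstacle I anticipate lies in step (ii): rigorously justifying that the actual iterative decoder stays on the oracle's trajectory, so that Lemma~\ref{Lem:PeelingRegularAnalysis} may be invoked verbatim. This demands a careful accounting of every query handed to the bin decoder -- including the resolvable-double-ton queries that first subtract the signature of an already-decoded node -- together with a verification that the per-query bound $r^{-(p-1)}$ applies uniformly despite the fact that the queries themselves depend on the same random signature matrix. Handling this dependence cleanly is what makes the union bound collapse to $O(K^{p}/N^{p-1})$ and reveals $K=o(N^{1-1/p})$ as the sharp condition under which the bin-decoder error vanishes.
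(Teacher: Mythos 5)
Your proposal follows essentially the same route as the paper: the same decomposition into the oracle-peeling failure event (handled by Lemma~\ref{Lem:PeelingRegularAnalysis} via Lemma~\ref{Lem:EdgeDDPrunedGraph}) and the bin-decoder error event (handled by a union bound with the per-bin bound $r^{-(p-1)}$ from Lemma~\ref{Lem:BinDecoderAnalysis}), yielding the same $O(K^{p}/N^{p-1})$ bound and the same reading of the hypothesis $K=o(N^{1-1/p})$. The only cosmetic difference is that you union-bound over $O(M+K\ell)$ queries where the paper uses $K\ell$; since $M=\ceps K$ both are $O(K)$ and the conclusion is unchanged.
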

\begin{proof}
It remains to be shown that for the proposed regular SAFFRON scheme the total probability of error vanishes asymptotically in $K$ and $N$. Let $E_1$ be the event of oracle-based peeling decoder terminating without recovering atleast $(1-\epsilon)K$ variable nodes. Let $E_2$ be the event of the bin decoder making an error during the entirety of the peeling process and $E_{\tx{bin}}$ be the event of one instance of bin decoder making an error. The total probability of error $P_e$ can be upper bounded by
\begin{align*}
P_e &\leq \tx{Pr}(E_1)+ \tx{Pr}(E_2)\\
               &\leq \tx{Pr}(E_1)+ K\ell\tx{ Pr}(E_{\tx{bin}})\\
               &\in O\left(\frac{K^{p}}{N^{p-1}}\right)
\end{align*}
where the second inequality is due to the union bound over a maximum of $K\ell$ (number of edges in the pruned graph) instances of bin decoding. The third line is due to the fact that $\tx{Pr}(E_1)$ is exponentially decaying in $K$ (see Lemma. \ref{Lem:PeelingRegularAnalysis}) and $\tx{Pr}(E_{\tx{bin}})=(\frac{\ceps K}{N\ell})^{p-1}$ (see Lemma. \ref{Lem:BinDecoderAnalysis} and Def. \ref{Def:RegSaffron})
\end{proof}

\section{Total recovery: Singleton-Only Variant}
\label{Sec:Singleton-only}
 In this section we will look at the proposed regular-SAFFRON scheme but with a decoder that uses only the singleton bins. To elaborate, the only difference is in the decoder which is not iterative in this framework and recovers the variable nodes connected to only the singleton bin nodes and terminates. We will refer to this scheme as \textit{singleton-only} regular-SAFFRON scheme. The trade-off is that we can now recover the \textit{whole} defective set instead of just a large fraction of the defective items with an additional $\log K$ factor tests. Since we do not need to be able to recover resolvable double-tons we only need $2\log_2 r$ number of tests at each bin i.e. we choose $p=1$ for the signature matrix in Eqn. \eqref{Eqn:SignatureMatrix}.
\begin{theorem}
Let $K=o(N)$. For $M=c_\alpha K \log K$ and $(\ell,r)=(c_\alpha \log K,\frac{N}{K})$ a random testing matrix from the regular SAFFRON ensemble $\mc{G}_{\ell,r}(N,M)\times \mbf{U}_{r,1}$ with $m=2c_\alpha K\log K \log_2 \frac{N}{K}$ tests the singleton-only decoder fails to recover all the non-zero variable nodes with a vanishing probability of $\mc{O}(K^{-\alpha})$ where $c_\alpha=e(1+\alpha)$.
\end{theorem}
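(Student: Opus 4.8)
The plan is to reduce the success of the singleton-only decoder to a purely combinatorial coverage condition on the pruned graph $\tilde{\mc{G}}_{\ell,r}(N,K,M)$ and then control that condition with a first-moment (union-bound) argument. First I would observe that, with $p=1$, the bin decoder detects singletons \emph{without error}: the signature of item $j$ is the complementary pair $[\mbf{b}_j;\overline{\mbf{b}}_j]$, so a genuine singleton produces an output in which exactly one entry of every complementary coordinate-pair equals $1$, whereas any multi-ton contains two distinct items differing in at least one bit and hence forces \emph{both} entries of some coordinate-pair to $1$. Thus no multi-ton can masquerade as a singleton, the bound $1/r^{p-1}$ of Lemma~\ref{Lem:BinDecoderAnalysis} (vacuous at $p=1$) is not needed, and the decoder recovers the entire defective set \emph{iff} every defective variable node is incident to at least one singleton bin, i.e. a bin of degree one in the pruned graph.

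Next I would bound, for a fixed defective variable node $v$, the probability that $v$ is covered by no singleton. Here the right degree is $r=N\ell/M=N/K$, so by the edge-degree computation of Lemma~\ref{Lem:EdgeDDPrunedGraph} each of the $\ell$ edges of $v$ lands in a singleton bin with probability
\begin{align*}
\rho_1=(1-\beta)^{r-1}=\left(1-\tfrac{K}{N}\right)^{N/K-1}\xrightarrow[K,N\to\infty]{}e^{-1},
\end{align*}
where $\beta=K/N$; since $\ln\rho_1=-1+\tfrac{\beta}{2}+O(\beta^2)>-1$ we in fact have $\rho_1\ge e^{-1}$, and $r\to\infty$ by the hypothesis $K=o(N)$. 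Treating the $\ell$ edge-landing events as asymptotically independent --- justified because the pruned configuration model is locally tree-like and the relevant events are non-positively correlated --- gives
\begin{align*}
\Pr[v\text{ uncovered}]\;\approx\;(1-\rho_1)^{\ell}\;\le\;\bigl(1-e^{-1}\bigr)^{\ell}.
\end{align*}

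To finish I would apply the elementary inequality $1-x\le e^{-x}$ at $x=1/e$, i.e. $1-e^{-1}\le e^{-1/e}$, so that $\Pr[v\text{ uncovered}]\le e^{-\ell/e}$. Substituting $\ell=c_\alpha\log K$ with $c_\alpha=e(1+\alpha)$ (with $\log$ the natural logarithm) collapses the exponent exactly, $e^{-\ell/e}=e^{-(1+\alpha)\ln K}=K^{-(1+\alpha)}$, and a union bound over the $K$ defective nodes yields
\begin{align*}
\Pr[\text{decoder fails}]\;\le\;K\cdot K^{-(1+\alpha)}\;=\;K^{-\alpha}\;\in\;\mc{O}(K^{-\alpha}).
\end{align*}
The test count is $m=M\cdot 2\log_2 r=2c_\alpha K\log K\log_2\tfrac{N}{K}$ as claimed, and the choice $c_\alpha=e(1+\alpha)$ is precisely what makes the factor $e$ cancel the $1/e$ in the exponent while the $(1+\alpha)$ absorbs the union-bound factor of $K$.

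The step I expect to be the main obstacle is making the asymptotic-independence/concentration claim in the second paragraph rigorous: the $\ell$ events ``$v$'s $i$-th bin is a singleton'' are coupled through the global edge-pairing permutation, so I would need to show either that they are non-positively correlated --- which makes $(1-e^{-1})^{\ell}$ a genuine upper bound rather than merely a limit --- or that their joint probability converges to the product with an error controllable \emph{uniformly} over the $K$ nodes, so that the union bound survives. Everything else --- the error-free singleton detection at $p=1$, the edge-degree limit $\rho_1\to e^{-1}$, and the final exponential estimate --- is routine.
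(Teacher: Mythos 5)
Your proposal is correct and follows essentially the same route as the paper: reduce success to every defective node touching at least one singleton bin, show the singleton probability tends to $e^{-1}$, and union-bound over the $K$ defectives using $1-x\le e^{-x}$ so that $c_\alpha=e(1+\alpha)$ cancels the exponent. The extra points you raise (error-free singleton detection at $p=1$, the sign of the $O(\beta)$ correction to $\rho_1$, and the independence of the $\ell$ bin-singleton events, which the paper simply asserts by writing $(1-R_1)^\ell$) are refinements of, not departures from, the paper's argument.
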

\begin{proof}
First we observe that for the choice of $(\ell,r)=(c_\alpha \log K,\frac{N}{K})$ number of bins $M=\frac{N\ell}{r}=c_\alpha K \log K$ and the number of tests in each bin is $2\log_2 \frac{N}{K}$. From Lem. \ref{Lem:BinDecoderAnalysis} we know that a singleton bin is guaranteed to be decoded by the bin decoder. Thus it is enough if we show that for this choice for the number of bins $M$ all the variable nodes in the pruned graph are connected to atleast one singleton bin w.h.p of $1-\mc{O}(K^{-\alpha})$.

In the pruned graph ensemble, for any particular variable node, the probability that any of the $\ell$ connected bit nodes are not a singleton can be given by $(1-R_1)^\ell$ where $R_1$ is the probability that a bin node in the pruned graph ensemble is a singleton. In the limit $K,N\rightarrow \infty$ the value of $R_1$ approaches (from Eq. \ref{Eqn:Deg1ChkDistribution})
\begin{align*}
R_1&=\lim _{K,N\rightarrow\infty}r\beta(1-\beta)^{r-1}\\
     &=\lim _{\frac{N}{K}\rightarrow\infty}\left(1-\frac{K}{N}\right)^{\frac{N}{K}-1}\\
     &= e^{-1}
\end{align*} 
By using union bound over all the $K$ variable nodes in the pruned graph, the probability $P_e$ that the singleton-only decoder fails to recover a defective item can be bounded by
\begin{align*}
P_e&\leq K(1-R_1)^\ell \\
&=\mc{O}\left(K(1-e^{-1})^{e(1+\alpha)\log K}\right)\\
&=\mc{O}\left(Ke^{-e^{-1}e(1+\alpha)\log K}\right)\\
&=\mc{O}\left(K^{-\alpha}\right).
\end{align*}
In third line we used $(1-x)\leq e^{-x}$.
\end{proof}

\section{Robust Group Testing}
\label{Sec:NoisyGroupTesting}
In this section we extend our scheme to the group testing problem where the test results can be noisy. Formally, the signal model can be described as 
\begin{align*}
\mbf{y=A}\odot \mbf{x+w},
\end{align*}
where $\mbf{w}\in\{0,1\}^N$ is an \iid noise vector distributed according to Bernoulli distribution with parameter $0<q<\frac{1}{2}$ and the addition is over binary field.

\subsection*{Testing Scheme}
In \cite{lee2015saffron} for the robust group testing problem, the signature matrix used for noiseless group testing problem is modified using an error control code such that it can handle singletons and resolvable doubletons in the presence of noise. The binning operation as defined by the bipartite graph is exactly identical to that of noiseless case. We describe the modifications to the signature matrix and the bin detection decoding scheme as given in \cite{lee2015saffron} for the sake of completeness and then state the performance bounds for our scheme for the noisy group testing problem.

Let $\mc{C}_n$ be a binary error-correcting code with the following definition:
\begin{itemize}
\item Let the encoder and decoder functions be $f:\{0,1\}^{n}\rightarrow \{0,1\}^{\frac{n}{R}}$ and $g:\{0,1\}^{\frac{n}{R}}\rightarrow \{0,1\}^{n}$ respectively where $R$ is the rate of the code.
\end{itemize}
 For ease of analysis and tight upper bound for the number of tests we will use random codes and the optimal maximum-likelihood decoder which gives us the properties:
\begin{itemize}
\item There exists a sequence of codes $\{\mc{C}_n\}$ with the rate of each code being $R$ satisfying 
\begin{align}
\label{Eqn:ProbErrorCoding}
R<1-H(q)-\delta=1+q\log_2 q+ \overline{q}\log_2\overline{q}-\delta
\end{align}
for any arbitrary small constant $\delta$ such that the probability of error $\text{Pr}\left(g(\mbf{x+w})\neq \mbf{x}\right)<2^{-\kappa n}$ for some $\kappa >0$. In Eqn.~\ref{Eqn:ProbErrorCoding}, $\overline{q}\coleq 1-q$. 
\end{itemize} 
Even though the computational complexity of using random codes is exponential in block length of the code since the block length for our application is $\mc{O}(\log \frac{N}{K})$ and hence we have an overall computational complexity of $\mc{O}(N)$. But in practice one can use any of the popular error-correcting codes such as spatially-coupled LDPC codes or polar codes which are known to be capacity achieving \cite{kumar2014threshold,kudekar2013spatially} whose computational complexity is linear in block length. 

The modified signature matrix $\mbf{U}'_{r,p}$ can be described via $\mbf{U}_{r, p}$ given in Eq. \eqref{Eqn:SignatureMatrix} and encoding function $f$ for $\mc{C}_n$ where $n=\ceil{\log_2 r}$ as follows:
 \begin{align}
\label{Eqn:SignatureMatrixModified}
\mbf{U}_{r,p}'\coleq\begin{bmatrix}
f(\mbf{b}_1)  & f(\mbf{b}_2) &\cdots & f(\mbf{b}_r) \\
\overline{f(\mbf{b}_1)} & \overline{f(\mbf{b}_2)} & \cdots & \overline{f(\mbf{b}_r)}\\
f(\mbf{b}_{\pi^{1}_{1}}) & f(\mbf{b}_{\pi^{1}_{2}}) & \cdots & f(\mbf{b}_{\pi^{1}_{r}})\\
\overline{f(\mbf{b}_{\pi^{1}_{1}})} & \overline{f(\mbf{b}_{\pi^{1}_{2}})} & \cdots & \overline{f(\mbf{b}_{\pi^{1}_{r}})}\\
\cdots &  &\vdots \\
f(\mbf{b}_{\pi^{p-1}_{1}}) & f(\mbf{b}_{\pi^{p-1}_{2}}) & \cdots & f(\mbf{b}_{\pi^{p-1}_{r}})\\
\overline{f(\mbf{b}_{\pi^{p-1}_{1}})} & \overline{f(\mbf{b}_{\pi^{p-1}_{2}})} & \cdots & \overline{f(\mbf{b}_{\pi^{p-1}_{r}})}\\
\end{bmatrix}
\end{align}  
Then the overall testing matrix $\mbf{A}$ is defined in identical fashion to the definition in Sec. \ref{Sec:PriorWork} for the case of noiseless case except that $\mbf{U}$ will be replaced by $\mbf{U'}$ in Eqn.~\eqref{Eqn:SignatureMatrixModified}. Formally it can be defined as $\mbf{A}\coleq [\mbf{A}_{1}^{T},\ldots,\mbf{A}_{M_{1}}^{T}]^T$ where $\mbf{A}_{i}=\mbf{U'} \diag (\mbf{t}_i)$ where the binary vectors $\mbf{t}_i$ are defined in Sec. \ref{Sec:PriorWork}. 

\subsection*{Decoding}
The decoding scheme for the robust group testing, similar to the case of noiseless case, has two parts with the peeling part of the decoder identical to that of the noiseless case whereas the bin detection part differs slightly with an extra step of decoding for the error control code involved.

Given the test output vector at a bin $\mbf{y}=[\mbf{y}^{T}_{01},\mbf{y}^{T}_{02},\mbf{y}^{T}_{11},\ldots,\mbf{y}^{T}_{(p-1)2}]^T$, the bin detection for the noisy case can be summarized as following: The decoder $\forall i\in[0:p-1]$ applies the decoding function $g(\cdot)$ to the first segment $\mbf{y}_{i1}$ in each section $i$ and obtains the location $l_i$ whose binary expansion is equal to the error-correcting decoder output  $g(\mbf{y}_{i1})$. The decoder then declares the bin as a singleton if $\pi_{l_0}^{i}=l_i~\forall i$.

Similarly given that one of the variable nodes connected to the bin is already decoded to be non-zero, the resolvable double-ton decoding can be summarized as following. Let the location of the already recovered variable node in the bin (originally a double-ton) be $l_0$ then the test output can be given as 
\begin{align*}
\begin{bmatrix}
\mbf{y}_{01}\\
\mbf{y}_{02}\\
\mbf{y}_{11}\\
\vdots \\
\mbf{y}_{p2}\\
\end{bmatrix}
=\mbf{u}_{l_0} \vee \mbf{u}_{l_1}+\mbf{w}=
\begin{bmatrix}
f(\mbf{b}_{l_0})\\
\vspace{2pt}
\overline{f(\mbf{b}_{l_0})}\\
f(\mbf{b}_{\pi^1_{l_0}})\\
\vdots \\
\overline{f(\mbf{b}_{\pi^{p}_{l_0}})}\\
\end{bmatrix} 
\vee
\begin{bmatrix}
f(\mbf{b}_{l_1})\\
\vspace{2pt}
\overline{f(\mbf{b}_{l_1})}\\
f(\mbf{b}_{\pi^1_{l_1}})\\
\vdots \\
\overline{f(\mbf{b}_{\pi^{p}_{l_1}})}\\
\end{bmatrix}+
\begin{bmatrix}
\mbf{w}_{01}\\
\vspace{2pt}
\mbf{w}_{02}\\
\mbf{w}_{11}\\
\vdots \\
\mbf{w}_{p2}\\
\end{bmatrix}
\end{align*} 
where the location of the second non-zero variable node $l_1$ needs to be recovered. Given $\mbf{y}=\mbf{u}_{l_0} \vee \mbf{u}_{l_1}+\mbf{w}$ and $\mbf{u}_{l_0}$, the first segments of each section in $\mbf{u}_{l_1}+\mbf{w}$ can be recovered since for each segment of $\mbf{u}_{l_0}$ either the vector $f(\mbf{b}_{\pi^k_{l_0}})$ or it's complement is available. Once the first section $f(\mbf{b}_{\pi_{l_1}^i})+\mbf{w}$ of each segment $i$ is recovered, we apply singleton decoding procedure and rules as described above.

\begin{lemma}[Robust Bin Decoder Analysis]
\label{Lem:RobustBinAnalysis}
For a signature matrix $\mbf{U'}_{r,p}$ as described in \eqref{Eqn:SignatureMatrixModified}, the robust bin decoder misses a singleton with probability no greater than $\frac{p}{r^{\kappa}}$. The robust bin decoder wrongly declares a singleton with probability no greater than $\frac{1}{r^{p\kappa +p-1}}$.
\end{lemma}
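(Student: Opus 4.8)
The plan is to prove the two bounds separately, in both cases first converting the coding guarantee into the form needed. Since $n=\ceil{\log_2 r}\ge \log_2 r$, the stated code property $\Pr\big(g(\mbf{x}+\mbf{w})\neq \mbf{x}\big)<2^{-\kappa n}$ immediately yields a per-section decoding-error probability of at most $2^{-\kappa\log_2 r}=r^{-\kappa}$, and everything else is built on this observation together with the independence of the $p$ sections (distinct random permutations $\pi^i$ and independent noise blocks $\mbf{w}_{i1}$).

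For the \emph{missed-singleton} bound, suppose the bin is a genuine singleton at location $l$. Then in each of the $p$ sections the first segment is $\mbf{y}_{i1}=f(\mbf{b}_{\pi^i_l})+\mbf{w}_{i1}$, i.e.\ a single codeword through the $\mathrm{BSC}(q)$, which the decoder returns as $\pi^i_l$ except with probability at most $r^{-\kappa}$. If all $p$ decodings succeed then $l_0=l$ and $l_i=\pi^i_l=\pi^i_{l_0}$ for every $i$, so the consistency test $\pi^i_{l_0}=l_i$ passes and the singleton is declared. Hence a singleton is missed only when at least one section mis-decodes, and a union bound over the $p$ sections gives the claimed $p/r^{\kappa}$.

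For the \emph{false-singleton} bound, fix a multi-ton with support $S$, $\abs{S}\ge 2$. The bin is declared a singleton at some $l^{*}=l_0$ precisely when, for all $i\in[0:p-1]$, the section-$i$ decoder returns a codeword index $l_i$ with $l_i=\pi^i_{l^{*}}$ (the $i=0$ test being vacuous). I would split this into two independent-ish contributions. First, in each section the observed first segment is the boolean OR $\bigvee_{s\in S}f(\mbf{b}_{\pi^i_s})$ of at least two codewords, perturbed by noise; for a random code this OR lies a constant Hamming fraction from \emph{every} codeword (a second random codeword sets roughly a quarter of the remaining coordinates), so the event that a section decodes to any codeword at all is a false-detection event of probability at most $r^{-\kappa}$, and by independence of the sections this contributes $r^{-p\kappa}$. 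Second, conditioned on the sections decoding to locations $l_0,\dots,l_{p-1}$ and on the action of each $\pi^i$ on $S$, the value $\pi^i_{l_0}$ is (for $l_0\notin S$) uniform over the $r-\abs{S}$ locations outside $\pi^i(S)$ and independent across $i$, so each of the $p-1$ nontrivial tests $\pi^i_{l_0}=l_i$ holds with probability at most $(r-\abs{S})^{-1}$, contributing $r^{-(p-1)}$. Multiplying, $\Pr\le r^{-(p\kappa+p-1)}$.

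The main obstacle is the first contribution: rigorously showing that bounded-distance (ML) decoding of an OR of $\ge 2$ random codewords plus $\mathrm{BSC}(q)$ noise lands in \emph{some} codeword's decoding region only with probability $\le r^{-\kappa}$, i.e.\ relating this false-detection exponent to the code's error exponent $\kappa$ through the random-coding distance properties. This is exactly the step that takes the place of the complementarity check used in the noiseless Lemma~\ref{Lem:BinDecoderAnalysis}. A secondary, more routine point is to verify that the residual case $l_0\in S$ (where $\pi^i_{l_0}$ lands inside $\pi^i(S)$) only strengthens the bound, since it forces each section to decode to one \emph{specific} participating codeword, and to confirm that the conditioning used to factor the coding randomness against the permutation randomness does not double count.
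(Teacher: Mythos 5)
Your first bound (the missed singleton) is exactly the paper's argument: each section's first segment is a single codeword through a binary symmetric channel, mis-decoded with probability at most $2^{-\kappa n}\le r^{-\kappa}$, and a union bound over the $p$ sections gives $p/r^{\kappa}$. No issue there.

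The second bound is where there is a genuine gap, and it is the step you yourself flag as ``the main obstacle.'' The decoding map $g:\{0,1\}^{n/R}\rightarrow\{0,1\}^{n}$ is a total function with no erasure or rejection option, so for a multi-ton every section \emph{always} decodes to some index; there is no ``fails to land in any codeword's decoding region'' event of probability $r^{-\kappa}$ to multiply in, and the only stated property of the code, $\tx{Pr}\left(g(\mbf{x+w})\neq \mbf{x}\right)<2^{-\kappa n}$, says nothing about inputs that are ORs of two or more codewords. Consequently, for a multi-ton the only protection is the $p-1$ permutation-consistency checks, worth $r^{-(p-1)}$ exactly as in the noiseless Lemma~\ref{Lem:BinDecoderAnalysis}; your extra factor $r^{-p\kappa}$ cannot be recovered, so your decomposition does not establish the claimed exponent for the event you chose. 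The paper avoids this entirely by bounding a \emph{different} event: it reads ``wrongly declares a singleton'' as a \emph{true singleton} bin for which the decoder outputs the singleton hypothesis with the \emph{wrong index}. There every section's input genuinely is one codeword plus noise, so section $0$ must err (probability $r^{-\kappa}$), and each of the remaining $p-1$ sections must both err and, under the assumption that an erring decoder outputs a uniformly random wrong index, land exactly on $\pi^{i}_{l_0}$ (an extra factor $1/r$); this gives $r^{-\kappa}\bigl(r^{-(1+\kappa)}\bigr)^{p-1}=r^{-(p\kappa+p-1)}$. Your second ingredient --- the $r^{-(p-1)}$ from the consistency checks via the randomness of the $\pi^{i}$, and the remark that $l_0\in S$ only helps --- is sound, but the $r^{-p\kappa}$ factor needs either a decoder with a rejection rule and a corresponding false-acceptance guarantee for non-codeword inputs, or a switch to the event the paper actually analyzes.
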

\begin{proof}
Let $E_i$ be the event that the error-control decoder $g(\mbf{y}_{i1})$ commits an error at section $i$. From Eqn. \eqref{Eqn:ProbErrorCoding} we know that $\tx{Pr}(E_i)=2^{-\kappa \log r}=r^{-\kappa}$.
The robust bin decoder misses a singleton if the error-control decoder $g(\mbf{y}_{i1})$ commits an error at any one section. Thus the probability of missing a singleton can be upper bounded by applying union bound over all the sections $i\in[0:p-1]$ giving the required result. 

Consider a singleton bin and let the event where the robust bin decoder outputs a singleton hypothesis but the wrong index be $E_\tx{bin}$. This event happens when the error-control decoder commits an error and outputs the exact same wrong index at each and every section. We assume that given the error-control decoder makes an error, the output is uniformly random among all the remaining indices. Thus $\tx{Pr}(E_\tx{bin})$ can be upper bounded by $\frac{1}{r^\kappa} (\frac{1}{r^{1+\kappa}})^{p-1}$ which upon simplification gives us the required result.
\end{proof}

The fraction of missed singletons can be compensated by using $M(1+\frac{p}{r^{\kappa}})$ instead of $M$ such that the total number of singletons decoded will be $M(1+\frac{p}{r^{\kappa}})(1-\frac{p}{r^{\kappa}})\approx M$.
\begin{theorem}
Let $p\in\Z$ such that $K=o\left(N^{1-1/p}\right)$. The proposed robust regular SAFFRON scheme using $m=c\cdot K \log_{2}\frac{N\ell}{\ceps K}$ tests recovers atleast $(1-\epsilon)$ fraction of the defective items w.h.p. where $c=2p\beta(q)\ceps$ and $\beta(q)=1/R$. 
%
\end{theorem}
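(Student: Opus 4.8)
The plan is to follow exactly the three-part structure of the proof of Theorem~\ref{Thm:NoiselessMain}, changing only the bin-decoder analysis and the test count. First I would observe that the binning stage (the bipartite graph) is identical to the noiseless scheme, so the pruned ensemble $\tilde{\mc{G}}_{\ell,r}(N,K,M)$ and its edge degree distribution (Lemma~\ref{Lem:EdgeDDPrunedGraph}) are unchanged. Hence the oracle-based peeling analysis of Lemma~\ref{Lem:PeelingRegularAnalysis} applies verbatim: for $M=\ceps K$ with $\ell,\ceps$ taken from Table~\ref{Table:constantsDE}, the event $E_1$ that peeling halts before recovering $(1-\epsilon)K$ variable nodes has probability decaying exponentially in $K$.

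Next I would recount the tests to absorb the coding overhead. Each of the $2p$ sections of the signature matrix is now an error-control codeword of block length $\beta(q)\ceil{\log_2 r}$ rather than $\ceil{\log_2 r}$, with $\beta(q)=1/R$, so $\mbf{U}'_{r,p}$ has $h'=2p\beta(q)\log_2 r$ rows. With $M=\ceps K$ bins and $r=\frac{N\ell}{\ceps K}$ this gives $m=Mh'=2p\beta(q)\ceps K\log_2\frac{N\ell}{\ceps K}=c\,K\log_2\frac{N\ell}{\ceps K}$ with $c=2p\beta(q)\ceps$, which is exactly the claimed count; since the per-bin channel decoder runs in time linear in its $\mc{O}(\log\frac{N}{K})$ block length, the decoding complexity stays $\mc{O}(K\log\frac{N}{K})$.

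The substantive step is the union bound on bin-decoder errors, now using Lemma~\ref{Lem:RobustBinAnalysis}. I would split the failure modes. A \emph{missed} singleton occurs with probability at most $p/r^{\kappa}$; rather than letting this hurt recovery I invoke the oversampling remark and run with $M(1+p/r^{\kappa})$ bins, so the number of correctly detected singletons is $\approx M$ and the peeling analysis survives up to a $1+o(1)$ factor. The harmful event is a \emph{false} detection. For a genuine singleton declared with the wrong index Lemma~\ref{Lem:RobustBinAnalysis} gives probability at most $\frac{1}{r^{p\kappa+p-1}}\le\frac{1}{r^{p-1}}$, and the resolvable-doubleton case reduces to this singleton test once the already-recovered node is stripped off, as described in the decoding section. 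Union bounding over the at most $K\ell$ bin-decoding instances encountered during peeling, the total false-alarm probability is at most $K\ell\,r^{-(p-1)}=K\ell\big(\tfrac{\ceps K}{N\ell}\big)^{p-1}\in O\!\left(\tfrac{K^{p}}{N^{p-1}}\right)$. Combining $P_e\le\tx{Pr}(E_1)+K\ell\,\tx{Pr}(E_{\tx{bin}})$, the first term is exponentially small and the second is $O(K^{p}/N^{p-1})$, which vanishes precisely when $K=o(N^{1-1/p})$.

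I expect two points to need the most care. The first is the oversampling bookkeeping: one must verify that compensating for missed singletons with extra bins leaves the density-evolution fixed point of Lemma~\ref{Lem:PeelingRegularAnalysis} undisturbed and inflates $m$ only by $1+o(1)$, so the stated constant $c$ is unchanged asymptotically. The second, and the real gap to close, is that Lemma~\ref{Lem:RobustBinAnalysis} bounds only the singleton sub-cases; I would additionally argue that a true \emph{multi-ton} is falsely declared a singleton or resolvable doubleton with probability $O(r^{-(p-1)})$ as well, since the across-section consistency check $\pi_{l_0}^{i}=l_i$ still supplies $p-1$ essentially independent constraints (now on coded blocks), matching the noiseless bound of Lemma~\ref{Lem:BinDecoderAnalysis} up to the coding factor and leaving the order of the union bound intact.
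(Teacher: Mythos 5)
Your proposal follows essentially the same route as the paper: the identical decomposition $P_e \leq \tx{Pr}(E_1) + K\ell\,\tx{Pr}(E_{\tx{bin}})$, the same union bound over at most $K\ell$ bin-decoding instances, and the same invocation of Lemma~\ref{Lem:RobustBinAnalysis} and Lemma~\ref{Lem:PeelingRegularAnalysis}; the only cosmetic difference is that you discard the coding exponent via $r^{-(p\kappa+p-1)} \leq r^{-(p-1)}$ to conclude $O(K^p/N^{p-1}) = o(1)$, whereas the paper retains $\kappa$ and, using $K \leq N^{(p-1)/p}$, states the sharper rate $\mc{O}(N^{-\kappa})$. Your added care about the oversampling for missed singletons and the false-alarm probability for multi-tons addresses details the paper's proof leaves implicit, and does not change the argument.
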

\begin{proof}
Similar to the noiseless case the total probability of error $P_e$ is dominated by the performance of bin decoder. 
\begin{align*}
P_e &\leq  \tx{Pr}(E_1)+ K\ell\tx{ Pr}(E_{\tx{bin}})\\
               &=\tx{Pr}(E_1)+ \mc{O}\left( \frac{K^{p+p\kappa}}{N^{p-1+p\kappa)}}\right)\\
   			   &= \mc{O}\left( \frac{N^{(p-1)(1+\kappa)}}{N^{p-1+p\kappa)}}\right)\\
               &\in \mc{O}(N^{-\kappa})
\end{align*}
where the second line is due to Lem. \ref{Lem:RobustBinAnalysis} and the third line is due to the fact that $\tx{Pr}(E_1)$ is exponentially decaying in $K$ and $K\leq N^{(p-1)/p}$ for large enough $K,N$.
\end{proof}

\section{Simulation Results}
In this section we will evaluate the performance of the proposed regular-SAFFRON scheme via Monte Carlo simulations and compare it with the results of SAFFRON scheme provided in \cite{lee2015saffron} for both the noiseless and noisy models.

\subsection*{Noiseless Group Testing}
As per Thm.~\ref{Thm:NoiselessMain} the proposed regular SAFFRON scheme requires only $6\ceps K\log \frac{N\ell}{\ceps K}$ tests as opposed to $6\ceps K\log N$ tests of SAFFRON scheme to recover $(1-\epsilon)$ fraction of defective items with a high probability. We demonstrate this by simulating the performance for the system parameters summarized below.
\begin{itemize}
\item We fix $N=2^{16}$ and $K=100$
\item For $\ell\in\{3,5,7\}$ we vary the number of bins $M=c K$. 
\item In Eqn.~\ref{Eqn:SignatureMatrix} the parameter $p=2$ is chosen for matrix $\mbf{U}$
\item Thus the bin detection size is $h=6\log_2 \frac{N\ell}{cK}$
\item Hence the total number of tests $m=6cK\log_2 \left(\frac{N\ell}{cK}\right)$
\end{itemize} 
The results are shown in Fig.~\ref{Fig:SimulationNoiseless}. We observe that there is clear improvement in performance for the proposed regular SAFFRON scheme when compared to the SAFFRON scheme for each $\ell\in\{3,5,7\}$.

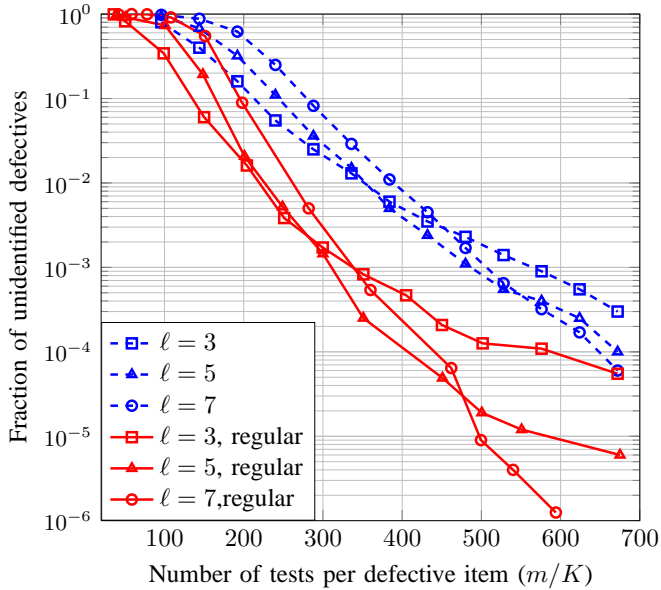
\begin{figure}[t!]
\centering
\resizebox{\columnwidth}{!}{
%
%
\begin{tikzpicture}
\def\fsize{\normalsize}
\pgfplotsset{every y tick label/.append style={font=\footnotesize}}
\pgfplotsset{every y tick label/.append style={font=\small}}

\begin{axis}[%
width=0.85\columnwidth,
height=0.8\columnwidth,
scale only axis,
xmin=20,
xmax=700,
xtick = {100,200,...,700},
xmajorgrids,
ymode=log,
ymin=1e-06,
ymax=1,
yminorticks=true,
ymajorgrids,
yminorgrids,
xlabel={\fsize{Number of tests per defective item ($m/K$)}},
ylabel={\fsize{Fraction of unidentified defectives}},
legend style={at={(0,0)},anchor=south west,draw=black,fill=white,legend cell align=left,font=\fsize}
]
\addplot [color=blue,dashed, mark=square,mark options={solid},line width=1pt]
  table[row sep=crcr]{96	0.8\\
144	0.4\\
192	0.16\\
240	0.055\\
288	0.025\\
336	0.013\\
384	0.006\\
432	0.0035\\
480	0.0023\\
528	0.0014\\
576	0.0009\\
624	0.00055\\
672	0.0003\\
};
\addlegendentry{$\ell=3$};

\addplot [color=blue,dashed,line width=1pt,mark=triangle,mark options={solid}]
  table[row sep=crcr]{96	0.94\\
144	0.68\\
192	0.32\\
240	0.11\\
288	0.036\\
336	0.015\\
384	0.005\\
432	0.0024\\
480	0.0011\\
528	0.00055\\
576	0.0004\\
624	0.00025\\
672	0.0001\\
};
\addlegendentry{$\ell=5$};

\addplot [color=blue,dashed,mark=o,mark options={solid},line width=1pt]
  table[row sep=crcr]{96	0.98\\
144	0.88\\
192	0.62\\
240	0.25\\
288	0.082\\
336	0.029\\
384	0.011\\
432	0.0045\\
480	0.0017\\
528	0.00065\\
576	0.00032\\
624	0.00017\\
672	6e-05\\
};
\addlegendentry{$\ell=7$};


\addplot [color=red,solid, line width=1pt, mark=square,mark options={solid}]
  table[row sep=crcr]{
36  0.99261\\
50.40      8.22e-1 \\
99.00      3.43e-1\\
150.00    6.01e-2 \\
204.00    1.61e-2\\
251.10    3.84e-3 \\
299.70    1.72e-3  \\
 351.00    8.28e-4 \\ 
 405.00    4.67e-4\\
 450.24    2.08e-4\\
 501.60  1.26e-4 \\  
 576.00  1.09e-4\\
672.00   5.50e-5\\
};
\addlegendentry{$\ell=3$, regular};

\addplot [color=red, solid, line width=1pt, mark=triangle,mark options={solid}]
  table[row sep=crcr]{
   39.000    0.970769\\
    50.700      9.16e-1   \\
100.800     7.33e-1   \\
148.500     1.93e-1   \\
201.000     2.07e-2  \\
249.000     5.22e-3  \\
300.000     1.450e-3\\
351.000     2.51e-4\\
450.900     4.90e-5\\
500.580     1.90e-5 \\
550.800     1.20e-5\\
675.000     6.00e-6\\
 };
\addlegendentry{$\ell=5$, regular};

\addplot [color=red, solid, line width=1.0pt,mark=o]
  table[row sep=crcr]{
42.00  0.99      \\
58.50	0.9995  \\
78.00    0.9937 \\
108.00   0.9142 \\
151.20  0.5487  \\
198.00 8.92e-2  \\
282.00  5.0e-3    \\
360.00  5.4e-4   \\
462.00  6.42e-5   \\
499.50  9.00e-6   \\
540.00  4.00e-6   \\
594.00  1.25e-6   \\
};
\addlegendentry{$\ell=7$,regular};

\end{axis}
\end{tikzpicture}
\caption{MonteCarlo simulations for $K=100, N=2^{16}$. We compare the SAFFRON scheme \cite{lee2015saffron} with the proposed regular SAFFRON scheme for various left degrees $\ell\in\{3,5,7\}$. 
The plots in blue indicate the SAFFRON scheme and the plots in red indicate our regular SAFFRON scheme based on left-and-right-regular bipartite graphs.}
\label{Fig:SimulationNoiseless}
\end{figure}

\subsection*{Noisy Group Testing}
Similar to the noiseless group testing problem we simulate the performance of our robust regular-SAFFRON scheme and compare it with that of the SAFFRON scheme. For convenience of comparison we choose our system parameters identical to the choices in \cite{lee2015saffron}. The system parameters are summarized below:
\begin{itemize}
\item $N=2^{32}, K=2^7$. We fix $\ell=12, M=11.36K$ 
\item BSC noise parameter $q\in\{0.03,0.04,0.05\}$
\item In Eqn.~\ref{Eqn:SignatureMatrix} the parameter $p=1$ is chosen for matrix $\mbf{U}$
\item Thus the bin detection size is $h=4\log_2 \frac{N\ell}{M}$
\end{itemize}
The results are shown in Fig. \ref{Fig:SimulationNoisy}. Note that for the above set of parameters the right degree $r=\frac{N\ell}{M}\approx 26$. We choose to operate in field $GF(2^7)$ thus giving us a message length of $4$ symbols. For the choice of code we use a $(4+2e,4)$ Reed-Solomon code for $e\in[0:8]$ thus giving us a column length of $4\times 7(4+2e)$ bits at each bin and the total number of tests $m=28M(4+2e)$.

\begin{figure}[t!]
\centering
%
%
\begin{tikzpicture}
\def\fsize{\normalsize}
\pgfplotsset{every y tick label/.append style={font=\footnotesize}}
\pgfplotsset{every y tick label/.append style={font=\small}}

\begin{axis}[%
width=0.85\columnwidth,
height=0.8\columnwidth,
scale only axis,
xmin=1,
xmax=12,
xmajorgrids,
xtick = {4,6,8,10,12},
ymode=log,
ymin=1e-07,
ymax=1,
yminorticks=true,
ymajorgrids,
yminorgrids,
xlabel={\fsize{Number of tests ($\times 10^5$)}},
ylabel={\fsize{Fraction of unidentified defectives}},
legend style={at={(0,0)},anchor=south west,draw=black,fill=white,legend cell align=left, font=\tiny}
]
\addplot [color=blue,dashed,mark=square,mark options={solid}]
  table[row sep=crcr]{4.1877504	 0.12\\
5.5836672	0.011\\
6.979584 	0.0009\\
8.3755008	8e-05\\
9.7714176	1.9e-05\\
11.1673344	 8e-06\\
};
\addlegendentry{q=0.03};

\addplot [color=blue,dashed,mark=triangle,mark options={solid}]
  table[row sep=crcr]{4.1877504 	0.5\\
5.5836672		0.16\\
6.979584 		0.035\\
8.3755008		0.008\\
9.7714176	 	0.0016\\
11.1673344 	0.00035\\
};
\addlegendentry{q=0.04};

\addplot [color=blue,dashed,line width=1.0pt,mark=o,mark options={solid}]
  table[row sep=crcr]{4.1877504 	0.8\\
5.5836672		0.5\\
6.979584 		0.3\\
8.3755008		0.12\\
9.7714176		0.05\\
11.1673344 	0.02\\
};
\addlegendentry{q=0.05};

\addplot [color=red,solid,line width=1.0pt,mark=square,mark options={solid}]
  table[row sep=crcr]{
 1.63     0.1797 \\
 2.44     6.363e-3\\
3.26     2.948e-4\\
4.07     3.700e-5\\
4.89     1.16e-5\\
5.70    2.00e-6\\
};
\addlegendentry{q=0.03, regular};

\addplot [color=red,solid,line width=1.0pt,mark=triangle,mark options={solid}]
  table[row sep=crcr]{
2.44     5.469e-2\\
3.26     6.168e-3\\
4.89     1.250e-4\\
6.52     8.000e-6\\
};
\addlegendentry{q=0.04, regular};

\addplot [color=red,solid,line width=1.0pt,mark=o,mark options={solid}]
  table[row sep=crcr]{
   2.44  	1.302e-1  \\
   3.26  	3.348e-2\\
   4.89	    3.005e-3\\
   6.52	    4.340e-4\\
   8.15     1.5625e-5  \\
};
\addlegendentry{q=0.05, regular};

\end{axis}
\end{tikzpicture}
\caption{MonteCarlo simulations for $K=128, N=2^{32}$. We compare the SAFFRON scheme with the proposed regular-SAFFRON scheme for a left degree $\ell=12$. We fix the number of bins and vary the rate of the error control code used. The plots in blue indicate the SAFFRON scheme\cite{lee2015saffron} and the plots in red indicate the regular-SAFFRON scheme based on left-and-right-regular bipartite graphs.}
\label{Fig:SimulationNoisy}
\end{figure}
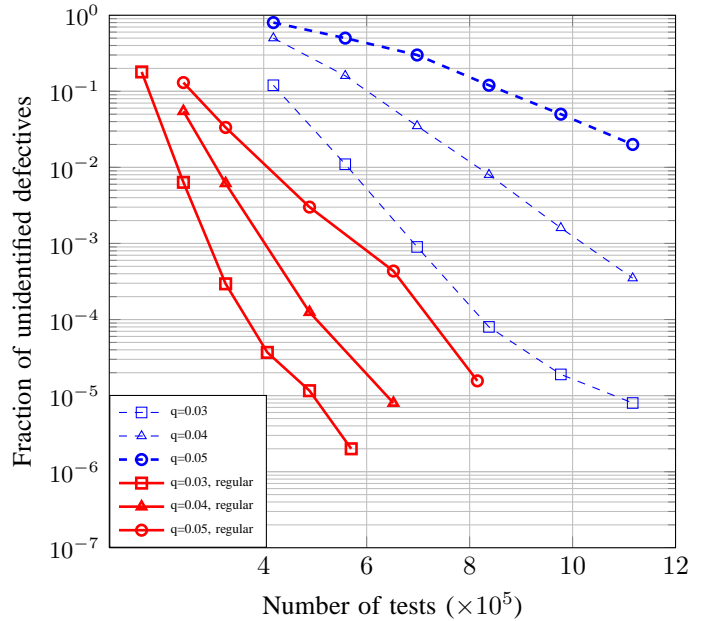

\section{Conclusion}
We addressed the Group Testing problem for $K$ defective items out of $N$ items. Using left-\emph{and}-right-regular sparse-graph codes we propose a new construction for the testing matrix based on that of Lee et al., \cite{lee2015saffron}. We show that this improves the testing complexity upon the previous results for the approximate version of the Group Testing problem and achieves asymptotically vanishing error probability under sub-linear time, order optimal, computational complexity. We also show that the proposed scheme with a variant of the original decoder has a testing complexity that is only $\log K$ factor away from the lower bound for the probabilistic version of the Group Testing problem with order optimal computational complexity.

\bibliographystyle{ieeetr}
\bibliography{journal_abbr,sparseestimation,grouptesting}
\end{document}